\numberwithin{equation}{section}
\newtheorem{definition}{Definition}[section]
\newtheorem{theorem}[definition]{Theorem} 
\newtheorem{proposition}[definition]{Proposition}
\newtheorem{remark}[definition]{Remark}
\begin{document}

\title{Intrinsic Geometry of Operational Contexts:\\
A Riemannian-Style Framework for Quantum Channels}

\author{Kazuyuki Yoshida\\
Independent Researcher, Minoo-City, Osaka, Japan\\
\texttt{kazuyuki.manny.yoshida@gmail.com}}
\date{\today}
\maketitle

\begin{abstract}
We propose an intrinsic geometric framework on the space of operational contexts,
specified by channels, stationary states, and self-preservation functionals.
Each context $C$ carries a pointer algebra, internal charges, and a self-consistent
configuration minimizing a self-preservation functional. The Hessian of this functional
yields an intrinsic metric on charge space, while non-commutative questioning loops
$dN \to d\Phi \to d\rho^\circ$ define a notion of curvature. In suitable regimes,
this N--Q--S geometry reduces to familiar Fisher-type information metrics and
admits charts that resemble Riemannian or Lorentzian space--times. We outline
how gauge symmetries and gravitational dynamics can be interpreted as holonomies
and consistency conditions in this context geometry.
\end{abstract}

\noindent\textbf{Keywords:}
context geometry; intrinsic geometry; operational quantum theory; information geometry; emergent gravity.
\medskip
\noindent\textbf{MSC 2020:}
Primary 81P45; Secondary 83C45, 46L53, 62B10.

\section{Introduction}
In this section we motivate N--Q--S intrinsic geometry as a framework
that treats contexts, channels, and self-preservation as primary,
and regards GR and the Standard Model as external charts on this geometry.

We briefly review the N--Q--S layers (N-layer openness, Q-layer channels,
S-layer self-preservation), and indicate how they can be reorganized
into a geometric language parallel to Riemannian geometry.
From a broader perspective, this work sits at the intersection of several existing lines of research. Information geometry endows statistical manifolds with Fisher-type metrics derived from divergence functionals, providing powerful tools for estimation and learning problems \cite{amariNagaoka2000,petz1996}. Quantum information geometry extends these ideas to monotone metrics on state spaces and to operationally meaningful distinguishability measures. In parallel, operational reconstructions of quantum theory and gravity emphasize channels, reference frames, and entanglement structure as the primary objects, with space–time and classicality emerging only in suitable limits \cite{hardy_op_gr,hoehnwever2018,blancoCasiniHungMyers2013,swingle2012,jacobson2016}. Our starting point is closer in spirit to these operational approaches, but applied directly to the geometry of contexts and self-preservation rather than to kinematical state spaces.

The contributions of this paper are threefold. First, we formalize a context space whose objects carry pointer algebras, internal symmetry groups, and self-preservation functionals, and we define an intrinsic metric as the Hessian of these functionals with respect to internal charges. Second, we introduce a context Laplacian and a global self-preservation action, leading to discrete self-consistency equations that reduce, in coarse-grained limits, to diffusion- and elasticity-type field equations on emergent manifolds. Third, we define sensitivity triples, NQS-connections, and NQS-curvature as operational analogs of tangent vectors, connections, and curvature, and we show how classical Riemannian geometry and Fisher-type information metrics appear as limiting cases. Simple Markov-chain and qutrit examples illustrate how the abstract structures can be computed in practice.

\subsection*{Overview of the N--Q--S construction}
Operationally, an N--Q--S geometry is built in three layers.
The N-layer parametrizes openness and external conditions through a graph of contexts, the Q-layer assigns to each context a quantum dynamical generator with a stationary state, and the S-layer endows each context with a self-preservation functional whose Hessian defines an intrinsic metric on a space of internal charges.
Small changes in openness then propagate along the chain
\[
  dN \;\to\; d\mathcal{L}_C \;\to\; d\rho_C^\circ \;\to\; dq(C) \;\to\; d^2\mathcal{F}_C,
\]
which we will call the \emph{sensitivity chain}.
The rest of the paper makes this schematic picture precise and shows how familiar Riemannian and information-geometric structures arise as limiting cases.

\begin{definition}[N--Q--S architecture]\label{def:NQS-architecture}
An \emph{N--Q--S architecture} consists of
\begin{itemize}[leftmargin=2em]
  \item an N-layer: a directed graph of contexts $\mathcal{C}$ with objects $C\in\mathrm{Ob}(\mathcal{C})$ and arrows $C\to C'$ encoding admissible changes of openness or external conditions;
  \item a Q-layer: for each $C\in\mathcal{C}$, a GKLS generator $\mathcal{L}_C$ on a fixed operator algebra together with a unique stationary state $\rho_C^\circ$ and a Doeblin-certified spectral gap $g(C)>0$;
  \item an S-layer: for each $C\in\mathcal{C}$, a self-preservation functional $\mathcal{F}_C$ on a space of internal charges $q(C)$ derived from pointer observables and stationary states, whose Hessian defines a positive-definite bilinear form on variations $dq(C)$.
\end{itemize}
Infinitesimal changes of openness $dN$ along the N-layer induce changes
$d\mathcal{L}_C$, $d\rho_C^\circ$, $dq(C)$, and $d^2\mathcal{F}_C$ along the Q- and S-layers, yielding the sensitivity chain above.
\end{definition}

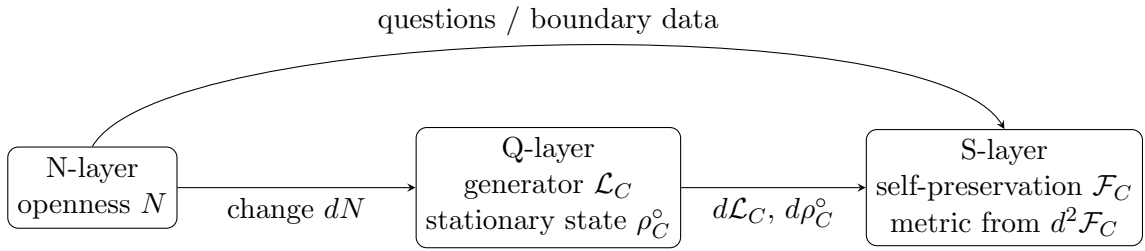
\begin{figure}[H]
\centering
\begin{tikzpicture}[node distance=6.0cm, >=stealth]
\node[draw, rounded corners, align=center] (N)
  {N-layer\\ openness $N$};
\node[draw, rounded corners, align=center, right of=N] (Q)
  {Q-layer\\ generator $\mathcal{L}_C$\\ stationary state $\rho_C^\circ$};
\node[draw, rounded corners, align=center, right of=Q] (S)
  {S-layer\\ self-preservation $\mathcal{F}_C$\\ metric from $d^2\mathcal{F}_C$};

\draw[->] (N) -- node[below]{change $dN$} (Q);
\draw[->] (Q) -- node[below]{$d\mathcal{L}_C,\, d\rho_C^\circ$} (S);

\draw[->, looseness=0.4, out=60, in=120]
  (N.north) to node[above]{questions / boundary data} (S.north);
\end{tikzpicture}
\caption{Schematic N--Q--S architecture and sensitivity chain
$dN \to d\mathcal{L}_C \to d\rho_C^\circ \to dq(C) \to d^2\mathcal{F}_C$.}
\label{fig:NQS-architecture}
\end{figure}

In Section~\ref{sec:Q-layer} we analyse the Q-layer and derive spectral gaps and Poisson-type sensitivity bounds from Doeblin minorization.
Section~\ref{sec:S-layer} constructs the S-layer self-preservation functionals and their Hessians, interpreted as intrinsic metrics on charge spaces.
Section~\ref{sec:N-layer} introduces a context graph, ghost couplings, and a global N--Q--S action, leading to discrete self-consistency equations and, in Section~\ref{sec:curvature}, to an N--Q--S connection and curvature.
Sections~\ref{sec:relation} and~\ref{sec:examples} relate these structures to classical Riemannian and information geometry and illustrate them on finite examples.

\subsection*{Riemannian geometry as a reference template}
For comparison, recall that classical Riemannian geometry specifies
a set of points $M$, tangent spaces $T_xM$, a metric $g$, a connection,
and the corresponding curvature tensor.
We will mirror this structure in an intrinsically operational way,
in the spirit of operational and information-theoretic formulations
of quantum theory and gravity.\cite{fuchsperes2000,hardy_op_gr,hoehnwever2018,delahamettegalley2024}

\subsection*{Main structural result}

The following informal theorem summarizes the main structural content of this paper; precise statements and proofs are given in Sections~\ref{sec:Q-layer}--\ref{sec:curvature}.

\begin{theorem}[Main structural result, informal]\label{thm:main-struct}
Assume an N--Q--S architecture as in Definition~\ref{def:NQS-architecture} with a finite context graph, GKLS generators with Doeblin-certified spectral gaps, and smooth self-preservation functionals.
Then:
\begin{enumerate}[label=(\roman*),leftmargin=2.5em]
  \item for each context $C$ the Hessian of $\mathcal{F}_C$ defines a positive-definite intrinsic metric on a space of charges, which in suitable parametric settings coincides with a quantum Fisher-type metric for the family of stationary states (Section~\ref{sec:S-layer});
  \item the Q-layer spectral gaps control the sensitivity of stationary states and induce a basic response inequality that bounds changes in $\rho_C^\circ$ in terms of perturbations of the generator $\mathcal{L}_C$ (Section~\ref{sec:Q-layer});
  \item a global N--Q--S action on context fields yields discrete self-consistency equations and, together with the sensitivity chain, an N--Q--S connection and curvature on the context graph, whose classical limits recover standard notions of metric, connection, and curvature on emergent manifolds and Fisher-type information geometry (Sections~\ref{sec:N-layer}--\ref{sec:relation}).
\end{enumerate}
\end{theorem}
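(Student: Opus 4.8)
The plan is to prove the three claims by working through the three layers of Definition~\ref{def:NQS-architecture} in the order (ii), (i), (iii), since the Q-layer response bound is the most self-contained and feeds the later parts. For (ii), I would fix a context $C$ with GKLS generator $\mathcal{L}_C$, stationary state $\rho_C^\circ$ satisfying $\mathcal{L}_C(\rho_C^\circ)=0$, and spectral gap $g(C)>0$. Perturbing the generator as $\mathcal{L}_C\mapsto\mathcal{L}_C+\varepsilon\,d\mathcal{L}_C$ and expanding the stationarity condition to first order yields $\mathcal{L}_C(d\rho_C^\circ)+d\mathcal{L}_C(\rho_C^\circ)=0$. Since the stationary state is unique by hypothesis, $\mathcal{L}_C$ is invertible on the trace-zero complement of its kernel, and the reduced resolvent is norm-bounded by $1/g(C)$. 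This gives the response inequality $\|d\rho_C^\circ\|\le g(C)^{-1}\,\|d\mathcal{L}_C(\rho_C^\circ)\|$, with the gap estimate itself supplied by Doeblin minorization as established in Section~\ref{sec:Q-layer}.

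For (i), I would first show that $\mathcal{F}_C$ attains an interior minimum at the self-consistent charge configuration, so its Hessian there is positive semidefinite; a non-degeneracy assumption on the pointer observables then upgrades this to positive definiteness, giving a genuine inner product on variations $dq(C)$. The identification with a quantum Fisher-type metric proceeds by specializing $\mathcal{F}_C$ to a divergence built from the stationary family $\{\rho_{C,q}^\circ\}$, for instance a relative-entropy or fidelity-type functional; differentiating twice at the minimum and invoking the standard fact that the second derivative of such a divergence reproduces a monotone (Petz-type) metric then gives the claimed coincidence. Care is needed here because the monotone metric is not unique, so the correspondence holds only for the specific functional and symmetric logarithmic derivative fixed in Section~\ref{sec:S-layer}.

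For (iii), I would assemble the local data into a global N--Q--S action on context fields over the finite graph, combining the local functionals $\mathcal{F}_C$ with edge (ghost) couplings. Varying the action produces the discrete self-consistency equations as its stationarity conditions, in the form of a context-Laplacian equation balancing the local curvature of $\mathcal{F}_C$ against the edge couplings. Parallel transport of charges along the sensitivity chain $dN\to d\mathcal{L}_C\to d\rho_C^\circ\to dq(C)$ defines the N--Q--S connection, and the failure of transport to commute around closed questioning loops defines the curvature, mirroring the Riemannian template recalled above. The classical limit is obtained by a coarse-graining in which the graph is refined toward a manifold, the discrete Laplacian converges to a continuum Laplace--Beltrami operator, and the connection and curvature converge to Levi-Civita and Riemann tensors; the Fisher-geometric identification from (i) supplies the limiting metric.

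The hardest step is the continuum limit in (iii): converting the discrete holonomies into a bona fide curvature tensor and controlling the coarse-graining requires scaling and regularity hypotheses on the graph and on the charge fields that go well beyond the finite, local estimates of (i) and (ii). In particular, showing that the non-commutativity of questioning loops assembles into an antisymmetric tensor converging to the Riemann curvature---rather than to a lower-order or gauge-dependent artifact---is the main obstacle, and is where I expect the bulk of the technical work in Sections~\ref{sec:N-layer}--\ref{sec:relation} to lie.
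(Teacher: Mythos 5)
Your proposal follows essentially the same route as the paper: part (ii) via the Poisson equation and the Doeblin-certified gap bounding the reduced resolvent by $1/g(C)$ (Proposition~\ref{prop:gap-sensitivity}), part (i) via the Hessian of a relative-entropy-type self-preservation functional reproducing a monotone Fisher metric, and part (iii) via the global action, discrete Euler--Lagrange equations, and holonomy of sensitivity-triple transport. Your closing caveat is also accurate --- the continuum limit in (iii) is precisely the step the paper itself leaves at the level of schematic claims rather than proof.
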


We now turn to the detailed construction of each layer.

\section{Context space and pointer algebras}
This section introduces the basic objects of NQS-geometry:
contexts as ``points'', pointer algebras and internal symmetries
as the fibres over those points.

\begin{definition}[Context space]
A \emph{context space} is a small category or graph $\mathcal{C}$
whose objects $C \in \mathrm{Ob}(\mathcal{C})$ are called \emph{contexts}
and whose morphisms $C \to C'$ encode admissible transitions
of experimental or existential situations.
In the simplest setting, $\mathcal{C}$ may be regarded as a discrete set
equipped with a directed graph structure.
\end{definition}

\begin{definition}[Pointer algebra and internal symmetry]
To each context $C \in \mathcal{C}$ we associate
\begin{itemize}[leftmargin=2em]
  \item a finite-dimensional $C^\ast$-algebra $\mathcal{A}_C$ of \emph{pointer observables},
  \item a compact Lie group $G_{\mathrm{int}}(C)$ of \emph{internal symmetries}
        acting by $^\ast$-automorphisms on $\mathcal{A}_C$,
  \item a family of mutually commuting self-adjoint elements
        $\{H_a(C)\}_a \subset \mathcal{A}_C$ playing the role of Cartan generators.
\end{itemize}
We write $q_a(C) := \Tr(\rho_C^\circ H_a(C))$ for the corresponding \emph{Cartan charges}
in the stationary state $\rho_C^\circ$ (defined below).
\end{definition}

\begin{remark}
In physical examples, $\mathcal{A}_C$ is typically a matrix algebra
generated by coarse-grained pointer projectors, and
$G_{\mathrm{int}}(C)$ may contain factors such as
$\mathrm{SU}(3)_\mathrm{color}$, $\mathrm{SU}(2)_L$, or $\mathrm{U}(1)_Y$.
\end{remark}

\section{Q-layer: channels, fixed points, and mixing}
\label{sec:Q-layer}

This section formalizes the Q-layer as a family of primitive channels
and their instantaneous fixed points, together with a quantitative
notion of mixing or classicalization. The main purpose is to make
precise how small changes in openness $N$ propagate to changes in
channels and stationary states.

\subsection{Primitive channels and stationary states}

\begin{definition}[NQS channel family]
For each context $C \in \mathcal{C}$ we are given
a one-parameter family of completely positive trace-preserving maps
\[
  \Phi_t^{(C)} : \mathcal{A}_C \to \mathcal{A}_C, \qquad t \ge 0,
\]
forming a semigroup $\Phi_{t+s}^{(C)} = \Phi_t^{(C)} \circ \Phi_s^{(C)}$,
and admitting a unique faithful stationary state
$\rho_C^\circ$ such that $(\Phi_t^{(C)})^\ast(\rho_C^\circ) = \rho_C^\circ$
for all $t \ge 0$.
We write
\[
  \mathcal{L}_C := \left.\frac{\mathrm{d}}{\mathrm{d}t}\right|_{t=0} \Phi_t^{(C)}
\]
for the corresponding (GKLS) generator.
\end{definition}

\begin{definition}[Primitive channel]
We say that $\Phi_t^{(C)}$ (or equivalently $\mathcal{L}_C$) is
\emph{primitive} if for any two normal states $\rho,\sigma$ on
$\mathcal{A}_C$ we have
\[
  \lim_{t\to\infty} \,
  \bigl\|(\Phi_t^{(C)})^\ast(\rho) - (\Phi_t^{(C)})^\ast(\sigma)\bigr\|_1 = 0,
\]
and the stationary state $\rho_C^\circ$ is faithful.
\end{definition}

Primitivity ensures that the long-time behaviour is effectively
one-dimensional (everything collapses onto $\rho_C^\circ$),
which will later allow us to bound the response of $\rho_C^\circ$
to perturbations of the channel.

\subsection{Doeblin-type mixing and spectral gap}

\begin{definition}[Doeblin-type minorization]
We say that the semigroup $\Phi_t^{(C)}$ satisfies a
\emph{Doeblin-type minorization} if there exist $t_0>0$ and
$\varepsilon(C)>0$ such that for all normal states $\rho$,
\begin{equation}
  (\Phi_{t_0}^{(C)})^\ast(\rho) \;\ge\;
  \varepsilon(C)\,\rho_C^\circ.
  \label{eq:doeblin}
\end{equation}
\end{definition}

The lower bound~\eqref{eq:doeblin} means that after time $t_0$
the evolved state $(\Phi_{t_0}^{(C)})^\ast(\rho)$ always has a
uniform component $\varepsilon(C)\,\rho_C^\circ$ that is independent
of the initial state.

\begin{proposition}[Spectral gap from Doeblin-type minorization]
If $\Phi_t^{(C)}$ satisfies the minorization~\eqref{eq:doeblin},
then there exists a constant $g(C)>0$ such that for all $t\ge 0$,
\begin{equation}
  \bigl\|(\Phi_t^{(C)})^\ast(\rho) - \rho_C^\circ\bigr\|_1
  \;\le\; e^{-g(C)t} \,
  \bigl\|\rho - \rho_C^\circ\bigr\|_1
  \label{eq:gap-bound}
\end{equation}
for all normal states $\rho$.
The constant $g(C)$ may be interpreted as a
\emph{mixing rate} or \emph{spectral gap} at context $C$.
\end{proposition}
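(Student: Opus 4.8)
The plan is to derive the exponential decay bound from the Doeblin-type minorization by a standard contraction argument on the trace distance. The key idea is that the minorization condition gives a strict contraction of the trace norm over each time interval of length $t_0$, and iterating this contraction yields geometric decay, which I then convert to the continuous exponential bound claimed in~\eqref{eq:gap-bound}.

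First I would reduce the problem to a statement about the contraction coefficient of the channel $P:=(\Phi_{t_0}^{(C)})^\ast$. For any two normal states $\rho,\sigma$, the minorization~\eqref{eq:doeblin} applied to each lets us write $P(\rho)=\varepsilon\,\rho_C^\circ+(1-\varepsilon)\,\tilde\rho$ and $P(\sigma)=\varepsilon\,\rho_C^\circ+(1-\varepsilon)\,\tilde\sigma$, where $\tilde\rho,\tilde\sigma$ are again normal states (positivity and unit trace follow from $P$ being trace-preserving and positive together with the lower bound). The shared component $\varepsilon\,\rho_C^\circ$ then cancels in the difference, giving
\[
  \bigl\|P(\rho)-P(\sigma)\bigr\|_1
  = (1-\varepsilon)\,\bigl\|\tilde\rho-\tilde\sigma\bigr\|_1
  \le (1-\varepsilon)\,\bigl\|\rho-\sigma\bigr\|_1,
\]
since any two states are at trace distance at most $2$ and more precisely the contraction can be bounded using $\|\tilde\rho-\tilde\sigma\|_1\le 2$ against $\|\rho-\sigma\|_1$; the cleanest route is to note that $\|P(\rho)-P(\sigma)\|_1=(1-\varepsilon)\|\tilde\rho-\tilde\sigma\|_1$ and that the Dobrushin coefficient of $P$ is thus at most $1-\varepsilon(C)$. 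Specializing $\sigma=\rho_C^\circ$ (a fixed point of $P$) gives the one-step contraction $\|P(\rho)-\rho_C^\circ\|_1\le(1-\varepsilon)\|\rho-\rho_C^\circ\|_1$.

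Next I would iterate. Writing an arbitrary $t\ge 0$ as $t=n t_0+r$ with $n=\lfloor t/t_0\rfloor$ and $0\le r<t_0$, I use the semigroup property $(\Phi_t^{(C)})^\ast=P^n\circ(\Phi_r^{(C)})^\ast$ together with the fact that every $(\Phi_s^{(C)})^\ast$ is itself a trace-norm contraction (CPTP maps are contractive in trace norm and fix $\rho_C^\circ$). This yields
\[
  \bigl\|(\Phi_t^{(C)})^\ast(\rho)-\rho_C^\circ\bigr\|_1
  \le (1-\varepsilon)^{n}\,\bigl\|\rho-\rho_C^\circ\bigr\|_1.
\]
Since $(1-\varepsilon)^n=e^{n\log(1-\varepsilon)}\le e^{(t/t_0-1)\log(1-\varepsilon)}$ and $\log(1-\varepsilon)<0$, I can absorb the boundary factor and set $g(C):=-\tfrac{1}{t_0}\log\!\bigl(1-\varepsilon(C)\bigr)>0$ to obtain~\eqref{eq:gap-bound}, possibly after adjusting the constant to account for the single extra factor of $(1-\varepsilon)^{-1}$ from the fractional part.

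The main obstacle I anticipate is the careful justification that $\tilde\rho$ is a genuine normal state and that the decomposition behaves well in the trace norm — in particular verifying that subtracting $\varepsilon\,\rho_C^\circ$ from $P(\rho)$ leaves a positive operator of trace $1-\varepsilon$, so that the residual is $(1-\varepsilon)$ times a state. This is where the faithfulness of $\rho_C^\circ$ and the precise meaning of the operator inequality~\eqref{eq:doeblin} matter. A secondary technical point is handling the fractional remainder $r$ cleanly so that the stated bound holds for \emph{all} $t\ge 0$ with a single constant $g(C)$; the simplest fix is to prove the bound with an explicit prefactor and then note that the prefactor can be set to $1$ at the cost of a slightly smaller $g(C)$, or alternatively to state the gap bound for $t$ that are integer multiples of $t_0$ and interpolate.
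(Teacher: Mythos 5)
Your proposal is the standard Doeblin--Dobrushin contraction argument, which is exactly what the paper has in mind: the paper in fact gives no proof of this proposition at all, only a remark asserting that the minorization implies a trace-norm contraction and hence a gap, so your write-up supplies the argument the paper merely gestures at. Two of the technical points you flag deserve to be taken seriously rather than deferred. First, the displayed inequality $\|P(\rho)-P(\sigma)\|_1=(1-\varepsilon)\|\tilde\rho-\tilde\sigma\|_1\le(1-\varepsilon)\|\rho-\sigma\|_1$ does not follow from $\|\tilde\rho-\tilde\sigma\|_1\le 2$ alone; you need the reduction to mutually singular pairs. Concretely, write $\rho-\sigma=c(\hat\rho-\hat\sigma)$ with $\hat\rho\perp\hat\sigma$ orthogonal states and $c=\tfrac12\|\rho-\sigma\|_1$ (Hahn--Jordan decomposition of the traceless difference), apply the Doeblin splitting to the orthogonal pair to get $\|P(\hat\rho)-P(\hat\sigma)\|_1\le 2(1-\varepsilon)$, and conclude by linearity; this is the content of the Dobrushin-coefficient lemma you invoke, and it should be spelled out since it is the only nontrivial step. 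Second, your worry about the fractional remainder is not merely cosmetic: for $0<t<t_0$ the hypothesis gives no strict contraction at all, so the bound with prefactor exactly $1$ for \emph{all} $t\ge 0$, as the proposition states it, cannot be obtained by shrinking $g(C)$ --- any $g>0$ would force strict decay at arbitrarily small times, which the Doeblin condition at a single time $t_0$ does not provide. The honest conclusion of your argument is $\|(\Phi_t^{(C)})^\ast(\rho)-\rho_C^\circ\|_1\le M\,e^{-g(C)t}\|\rho-\rho_C^\circ\|_1$ with $M=(1-\varepsilon)^{-1}$ and $g(C)=-t_0^{-1}\log(1-\varepsilon(C))$, or the prefactor-free bound restricted to $t\in t_0\mathbb{N}$. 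This is a defect in the proposition as stated rather than in your proof, but you should state the corrected version rather than leave the fix as an alternative.
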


\begin{remark}
The estimate~\eqref{eq:gap-bound} can be seen as a non-commutative
version of the standard Doeblin--Dobrushin contraction bound
for classical Markov chains: the minorization~\eqref{eq:doeblin}
implies a contraction in trace norm, and hence a strictly
positive gap.
\end{remark}

\subsection{Poisson-type equation and sensitivity of fixed points}

We now recall the Poisson-type equation that links perturbations
of the channel to perturbations of the stationary state.

\begin{definition}[Sensitivity triple and Poisson equation]
Let $C$ be a fixed context, and consider a small perturbation of the
generator $\mathcal{L}_C \mapsto \mathcal{L}_C + \delta\mathcal{L}_C$.
We denote the corresponding first-order change in the stationary state
by $\delta\rho_C^\circ$. The pair $(\delta\mathcal{L}_C,\delta\rho_C^\circ)$
is said to satisfy the \emph{Poisson-type equation} if
\begin{equation}
  \mathcal{L}_C^\ast(\delta\rho_C^\circ)
  \;=\; -\,\delta\mathcal{L}_C^\ast(\rho_C^\circ),
  \label{eq:poisson-Q}
\end{equation}
together with the normalization condition $\Tr(\delta\rho_C^\circ)=0$.
\end{definition}

Formally,~\eqref{eq:poisson-Q} is obtained by differentiating the
stationarity condition $\mathcal{L}_C^\ast(\rho_C^\circ)=0$
with respect to the perturbation parameter.

\begin{proposition}[Gap-controlled sensitivity bound]
\label{prop:gap-sensitivity}
Assume that $\Phi_t^{(C)}$ is primitive and satisfies the
gap estimate~\eqref{eq:gap-bound} with constant $g(C)>0$.
Let $(\delta\mathcal{L}_C,\delta\rho_C^\circ)$ solve
the Poisson equation~\eqref{eq:poisson-Q}.
Then
\begin{equation}
  \|\delta\rho_C^\circ\|_1
  \;\le\;
  \frac{1}{g(C)}\,
  \bigl\|\delta\mathcal{L}_C^\ast(\rho_C^\circ)\bigr\|_1.
  \label{eq:basic-sensitivity}
\end{equation}
More generally, if we measure perturbations of the generator
in an operator norm $\|\cdot\|_{1\to 1}$, we obtain
\begin{equation}
  \|\delta\rho_C^\circ\|_1
  \;\le\;
  \frac{1}{g(C)}\,
  \|\delta\mathcal{L}_C\|_{1\to 1}.
  \label{eq:operator-sensitivity}
\end{equation}
\end{proposition}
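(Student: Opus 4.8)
The plan is to represent the solution of the Poisson equation \eqref{eq:poisson-Q} as a reduced resolvent of $\mathcal{L}_C^\ast$ on the subspace of trace-zero Hermitian operators, and then to read off \eqref{eq:basic-sensitivity} directly from the gap estimate \eqref{eq:gap-bound}. Writing $Y := \delta\mathcal{L}_C^\ast(\rho_C^\circ)$, equation \eqref{eq:poisson-Q} becomes $\mathcal{L}_C^\ast(\delta\rho_C^\circ) = -Y$. First I would check that both $Y$ and $\delta\rho_C^\circ$ lie in the subspace $\mathcal{T}_0 := \{X = X^\ast : \Tr X = 0\}$. The condition $\Tr(\delta\rho_C^\circ)=0$ is imposed, and $\Tr Y = 0$ follows because any admissible generator is trace-annihilating to first order, i.e. $\Tr(\delta\mathcal{L}_C^\ast(Z))=0$ for all $Z$; Hermiticity of $Y$ follows since $\delta\mathcal{L}_C^\ast$ is Hermiticity-preserving.

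The key step is to upgrade the state-level contraction \eqref{eq:gap-bound} to a genuine contraction of $(\Phi_t^{(C)})^\ast$ on all of $\mathcal{T}_0$, namely $\|(\Phi_t^{(C)})^\ast X\|_1 \le e^{-g(C)t}\|X\|_1$ for every $X \in \mathcal{T}_0$. Here I would exploit faithfulness of $\rho_C^\circ$: since $\rho_C^\circ$ is strictly positive, for any $X \in \mathcal{T}_0$ and all sufficiently small $s>0$ the operator $\rho_s := \rho_C^\circ + sX$ is a normal state. Applying \eqref{eq:gap-bound} to $\rho_s$, using $(\Phi_t^{(C)})^\ast(\rho_C^\circ)=\rho_C^\circ$ and linearity, gives $s\,\|(\Phi_t^{(C)})^\ast X\|_1 = \|(\Phi_t^{(C)})^\ast \rho_s - \rho_C^\circ\|_1 \le e^{-g(C)t}\|\rho_s - \rho_C^\circ\|_1 = e^{-g(C)t}\,s\,\|X\|_1$, and cancelling $s$ yields the claim by homogeneity.

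With this contraction in hand, I would define $RY := \int_0^\infty (\Phi_t^{(C)})^\ast Y \, \mathrm{d}t$; the integral converges absolutely because the integrand is dominated by $e^{-g(C)t}\|Y\|_1$. Differentiating $(\Phi_t^{(C)})^\ast Y$ and applying the fundamental theorem of calculus gives $\mathcal{L}_C^\ast(RY) = \lim_{t\to\infty}(\Phi_t^{(C)})^\ast Y - Y = -Y$, the limit vanishing by the contraction; hence $RY$ solves \eqref{eq:poisson-Q}. Primitivity guarantees $\ker \mathcal{L}_C^\ast \cap \mathcal{T}_0 = \{0\}$ (the only stationary operators are multiples of $\rho_C^\circ$, which has unit trace), so the trace-zero solution is unique and $\delta\rho_C^\circ = RY$. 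Then $\|\delta\rho_C^\circ\|_1 \le \int_0^\infty \|(\Phi_t^{(C)})^\ast Y\|_1\,\mathrm{d}t \le \|Y\|_1\int_0^\infty e^{-g(C)t}\,\mathrm{d}t = g(C)^{-1}\|Y\|_1$, which is \eqref{eq:basic-sensitivity}. The operator-norm refinement \eqref{eq:operator-sensitivity} then follows from the submultiplicative estimate $\|Y\|_1 = \|\delta\mathcal{L}_C^\ast(\rho_C^\circ)\|_1 \le \|\delta\mathcal{L}_C\|_{1\to1}\,\|\rho_C^\circ\|_1 = \|\delta\mathcal{L}_C\|_{1\to1}$.

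I expect the main obstacle to be the second paragraph: passing from the state-level gap to a contraction on all of $\mathcal{T}_0$ \emph{without} an extraneous multiplicative constant. A naive decomposition $X = c(\rho-\sigma)$ into normalized positive and negative parts loses a factor of two through the triangle inequality, whereas faithfulness of $\rho_C^\circ$—placing it in the interior of the state space so that $\rho_C^\circ + sX$ is itself a state—is precisely what gives the sharp constant. A secondary point requiring care is justifying that the perturbative $\delta\rho_C^\circ$ genuinely coincides with the reduced-resolvent solution $RY$; this is where primitivity, via uniqueness of the trace-zero preimage, is essential.
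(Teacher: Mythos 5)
Your proof is correct, and it follows the same overall strategy as the paper's: restrict to the trace-zero subspace and show that $-\mathcal{L}_C^\ast$ is invertible there with $1\to 1$ norm at most $1/g(C)$. The execution of the key step differs, though, and in a way that matters. The paper passes from the mixing estimate to the assertion that the spectrum of $\mathcal{L}_C^\ast$ on the traceless subspace lies in $\{\Re\lambda\le -g(C)\}$ and then \emph{directly} infers $\|(-\mathcal{L}_C^\ast)^{-1}\|_{1\to 1}\le 1/g(C)$; for a non-normal operator the spectral location alone does not give a resolvent norm bound, so as written that inference has a gap. Your route --- first upgrading the state-level contraction \eqref{eq:gap-bound} to a genuine operator-norm contraction $\|(\Phi_t^{(C)})^\ast X\|_1\le e^{-g(C)t}\|X\|_1$ on all of $\mathcal{T}_0$ via faithfulness of $\rho_C^\circ$ (which also avoids the factor-of-two loss from a naive Jordan decomposition), and then representing the inverse as the convergent integral $\int_0^\infty (\Phi_t^{(C)})^\ast\,\mathrm{d}t$ --- is exactly what is needed to turn the semigroup decay into the norm bound $1/g(C)$, and your uniqueness argument via primitivity correctly identifies $\delta\rho_C^\circ$ with the reduced-resolvent solution. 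In short, you prove the same lemma the paper invokes, but you actually supply the argument the paper elides; the only cosmetic point is that the final operator-norm bound \eqref{eq:operator-sensitivity} implicitly reads $\|\delta\mathcal{L}_C\|_{1\to 1}$ as the norm of the predual (Schr\"odinger-picture) action $\delta\mathcal{L}_C^\ast$ on trace-class operators, which is evidently the paper's intended convention as well.
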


\begin{proof}
Restrict $\mathcal{L}_C^\ast$ to the subspace
$\mathcal{B}_0 := \{ X \mid \Tr(X)=0\}$ of traceless operators.
The Doeblin-type minorization and the mixing estimate
\eqref{eq:gap-bound} imply that the dual semigroup
$e^{t\mathcal{L}_C^\ast}$ is a strict contraction on $\mathcal{B}_0$
in trace norm, and hence that the spectrum of $\mathcal{L}_C^\ast$
on $\mathcal{B}_0$ is contained in $\{\lambda : \Re\lambda\le -g(C)\}$;
see e.g.\ Davies~\cite{davies_open_systems}, Chap.~2.
It follows that the inverse of $-\mathcal{L}_C^\ast$ on $\mathcal{B}_0$
is bounded by
\[
  \|(-\mathcal{L}_C^\ast)^{-1}\|_{1\to 1} \le \frac{1}{g(C)}.
\]
Writing the Poisson equation as
$\mathcal{L}_C^\ast(\delta\rho_C^\circ)
 = -\delta\mathcal{L}_C^\ast(\rho_C^\circ)$
and solving for $\delta\rho_C^\circ$ then gives
\[
  \|\delta\rho_C^\circ\|_1
  \le \frac{1}{g(C)}\,
      \bigl\|\delta\mathcal{L}_C^\ast(\rho_C^\circ)\bigr\|_1,
\]
which is \eqref{eq:basic-sensitivity}. The operator-norm version
\eqref{eq:operator-sensitivity} follows immediately.
\end{proof}

The inequality~\eqref{eq:operator-sensitivity} will be one of the key
inputs when we relate channel perturbations, charge perturbations,
and the intrinsic metric in the next section.

\section{S-layer: self-preservation functionals and intrinsic metric}
\label{sec:S-layer}

This section introduces self-preservation functionals $\mathcal{F}_C$
on internal charge space, and defines an intrinsic metric as their Hessian
at self-consistent charges. We also relate this metric to Fisher-type
information metrics in simple examples.

\subsection{Self-preservation functional and self-consistency}

\begin{definition}[Internal charge space]
For each context $C$ we fix a finite family of commuting Cartan
generators $\{H_a(C)\}_{a=1}^{r_C} \subset \mathcal{A}_C$ and
define the associated charge map
\[
  q_a(C;\rho) := \Tr(\rho\,H_a(C)), \qquad a=1,\dots,r_C,
\]
for any state $\rho$ on $\mathcal{A}_C$.
We write $q(C;\rho) = (q_1(C;\rho),\dots,q_{r_C}(C;\rho)) \in \mathbb{R}^{r_C}$.
\end{definition}

\begin{definition}[Self-preservation functional]
For each context $C$, a \emph{self-preservation functional} is a function
\[
  \mathcal{F}_C : \mathbb{R}^{r_C} \times \prod_{C'\sim C}\mathbb{R}^{r_{C'}}
  \to \mathbb{R},
  \qquad (q(C),\{q(C')\}_{C'\sim C}) \mapsto \mathcal{F}_C,
\]
which quantifies the cost of realizing charges $q(C)$ given the
neighbouring charges $\{q(C')\}_{C'\sim C}$.
We often write
\[
  \mathcal{F}_C(q) =
  V_{\mathrm{int}}(q(C)) + V_{\mathrm{neigh}}(q(C);\{q(C')\}_{C'\sim C}),
\]
where $V_{\mathrm{int}}$ penalizes deviations from internally preferred
charges, and $V_{\mathrm{neigh}}$ penalizes mismatches with neighbouring
contexts.
\end{definition}

\begin{definition}[Self-consistent charges]
A family of charges $\{q^\star(C)\}_{C\in\mathcal{C}}$ is called
\emph{self-consistent} if, for each $C$,
\[
  q^\star(C) \in \arg\min_{q(C)} \,
  \mathcal{F}_C\bigl(q(C),\{q^\star(C')\}_{C'\sim C}\bigr).
\]
We assume that such a family exists and is locally unique.
\end{definition}

In simple quadratic models one may take
\begin{equation}
  \mathcal{F}_C(q)
  = \frac{1}{2}\,(q(C)-\bar{q}(C))^\top K_C\,(q(C)-\bar{q}(C))
    + \frac{1}{2}\sum_{C'\sim C} w_{CC'}\,
      \|q(C)-q(C')\|^2,
  \label{eq:quadratic-F}
\end{equation}
with $K_C$ positive definite and $w_{CC'}\ge 0$, where $\bar{q}(C)$
is a preferred internal configuration.

\subsection{Intrinsic metric from the Hessian}

\begin{definition}[Intrinsic NQS-metric]
The \emph{intrinsic metric} at context $C$ is the symmetric matrix
\begin{equation}
  g_{ab}(C)
  := \left.
     \frac{\partial^2 \mathcal{F}_C}
          {\partial q_a(C)\,\partial q_b(C)}
     \right|_{q = q^\star},
  \label{eq:NQS-metric}
\end{equation}
where $q^\star$ denotes a self-consistent configuration.
For small perturbations $\delta q(C)$ around $q^\star(C)$, the
quadratic form
\begin{equation}
  \delta s^2
  := \sum_{a,b} g_{ab}(C)\,\delta q_a(C)\,\delta q_b(C)
\end{equation}
approximates the second-order increase in self-preservation cost
at context $C$.
\end{definition}

In the quadratic example~\eqref{eq:quadratic-F}, we simply have
$g_{ab}(C) = (K_C)_{ab}$, so the metric directly reflects the
internal stiffness of the self-preservation functional.

\subsection{Relation to Fisher information in simple models}

To connect the NQS-metric~\eqref{eq:NQS-metric} with Fisher-type
information metrics, we consider a parametric family of channels
and stationary states.

\begin{definition}[Parametric channel family]
Let $\theta = (\theta^1,\dots,\theta^k)$ be a set of parameters
controlling the generator $\mathcal{L}_C(\theta)$ and stationary
state $\rho_C^\circ(\theta)$. We denote derivatives by
$\partial_i := \partial/\partial\theta^i$.
\end{definition}

A natural choice of self-preservation functional in this setting is
\begin{equation}
  \mathcal{F}_C(\theta)
  := D\bigl(\rho_C^\circ(\theta)\,\big\|\,\rho_C^\circ(\theta^\star)\bigr),
  \label{eq:F-from-entropy}
\end{equation}
where $D(\rho\|\sigma)$ denotes the quantum relative entropy and
$\theta^\star$ is the self-consistent parameter value.

\begin{proposition}[Hessian and Fisher-type metric]
Assume that $\theta^\star$ is a local minimum of~\eqref{eq:F-from-entropy}
and that the family $\rho_C^\circ(\theta)$ is sufficiently smooth.
Then the Hessian
\[
  g_{ij}(C)
  := \left.\frac{\partial^2 \mathcal{F}_C(\theta)}
                 {\partial\theta^i\,\partial\theta^j}
      \right|_{\theta=\theta^\star}
\]
coincides with a quantum Fisher-type information metric associated
with the family $\rho_C^\circ(\theta)$, in the sense that there
exists a choice of symmetric logarithmic derivatives $L_i$ such that
\[
  g_{ij}(C) = \Tr\bigl(\rho_C^\circ(\theta^\star) L_i L_j\bigr).
\]
\end{proposition}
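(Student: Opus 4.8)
The plan is to obtain the Hessian by a second-order Taylor expansion of the relative entropy about $\theta^\star$ and to recognize the resulting quadratic form as a monotone quantum Fisher metric. Write $\sigma := \rho_C^\circ(\theta^\star)$ and $\rho_\theta := \rho_C^\circ(\theta)$, and abbreviate $\partial_i\rho := \partial_i\rho_\theta|_{\theta=\theta^\star}$ and $L_i := \partial_i\log\rho_\theta|_{\theta=\theta^\star}$. The zeroth-order term is $D(\sigma\|\sigma)=0$, so the first nonvanishing contribution is what we must identify.

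First I would compute the gradient. Using $D(\rho_\theta\|\sigma) = \Tr(\rho_\theta\log\rho_\theta) - \Tr(\rho_\theta\log\sigma)$ together with the elementary identity $\Tr(\rho_\theta\,\partial_j\log\rho_\theta) = \Tr(\partial_j\rho_\theta)$, one finds
\[
  \partial_j D = \Tr\bigl(\partial_j\rho_\theta\,(\log\rho_\theta - \log\sigma)\bigr) + \Tr(\partial_j\rho_\theta).
\]
Since $\rho_\theta$ is a state for every $\theta$, trace preservation gives $\Tr(\partial_j\rho_\theta)=0$ identically, so $\partial_j D = \Tr(\partial_j\rho_\theta(\log\rho_\theta-\log\sigma))$; at $\theta=\theta^\star$ the bracket vanishes, confirming that $\theta^\star$ is a critical point (consistent with its being a local minimum). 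Differentiating once more, the term proportional to $\partial_i\partial_j\rho_\theta$ again carries the factor $(\log\rho_\theta-\log\sigma)$, which vanishes at $\theta^\star$, so only the surviving piece remains:
\[
  g_{ij}(C) = \Tr(\partial_j\rho\,L_i) = \Tr(\partial_i\rho\,L_j).
\]
Inserting the integral representation $L_j = \int_0^\infty (\sigma+s)^{-1}(\partial_j\rho)(\sigma+s)^{-1}\,\mathrm{d}s$ yields the manifestly symmetric, positive-definite form
\[
  g_{ij}(C) = \int_0^\infty \Tr\!\bigl((\partial_i\rho)(\sigma+s)^{-1}(\partial_j\rho)(\sigma+s)^{-1}\bigr)\,\mathrm{d}s,
\]
which establishes that $g_{ij}(C)$ is a genuine metric: it is the Kubo--Mori (Bogoliubov--Kubo--Mori) metric of the family $\rho_C^\circ(\theta)$.

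To reach the advertised bilinear shape, I would use the Kubo identity $\partial_i\rho = \int_0^1 \sigma^{s} L_i\,\sigma^{1-s}\,\mathrm{d}s$ (valid because $\rho = e^{\log\rho}$), which recasts the metric as $g_{ij}(C) = \int_0^1 \Tr(\sigma^{s} L_i\,\sigma^{1-s} L_j)\,\mathrm{d}s$ with self-adjoint $L_i$. When $[\sigma,\partial_i\rho]=0$ the $s$-integral collapses and one recovers exactly $g_{ij}(C) = \Tr(\sigma\,L_iL_j)$, giving the claimed formula with $L_i$ the (then commuting) logarithmic derivatives.

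The main obstacle is precisely this final identification. For a genuinely noncommuting family the Hessian of the relative entropy is the Kubo--Mori metric, which is \emph{not} the symmetric-logarithmic-derivative (Bures/SLD) metric $\tfrac12\Tr(\sigma\{L_i^{\mathrm{SLD}},L_j^{\mathrm{SLD}}\})$; the two monotone metrics coincide only on the commutative submanifold $[\sigma,\partial_i\rho]=0$. Consequently the bare expression $g_{ij}(C)=\Tr(\sigma L_iL_j)$ holds literally only in that quasi-classical regime, or under the convention that $L_i$ denotes the Kubo--Mori rather than the symmetric logarithmic derivative. I would therefore make the statement precise by either (i) restricting to commuting charge perturbations, or (ii) reading ``symmetric logarithmic derivative'' in the Petz operator-monotone sense, representing every monotone metric as $g_{ij}=\Tr(\partial_i\rho\,\Omega_\sigma^{-1}(\partial_j\rho))$ with $\Omega_\sigma$ the operator mean attached to the generating function --- here the logarithmic mean, which singles out Kubo--Mori. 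Either route yields a rigorous ``quantum Fisher-type'' identification while making explicit that the particular metric produced by relative entropy is the Kubo--Mori one.
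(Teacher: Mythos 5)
The paper states this proposition without any proof (it is followed only by a remark deferring to Petz's classification of monotone metrics), so there is nothing to compare against; your derivation supplies the missing argument, and it is essentially correct. The computation of the gradient via $\Tr(\rho_\theta\,\partial_j\log\rho_\theta)=\Tr(\partial_j\rho_\theta)=0$, the observation that the $\partial_i\partial_j\rho$ term drops at $\theta^\star$ because $\log\rho_{\theta^\star}-\log\sigma=0$, and the resulting identification of the Hessian with $\Tr(\partial_j\rho\,L_i)$ and hence with the Kubo--Mori form $\int_0^1\Tr(\sigma^{s}L_i\sigma^{1-s}L_j)\,\mathrm{d}s$ are all sound (faithfulness of $\rho_C^\circ$, assumed in the Q-layer, keeps the logarithms and the integral representations well defined). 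Your closing caveat is the most valuable part: the Hessian of the relative entropy is the Bogoliubov--Kubo--Mori metric, not the SLD/Bures metric, so the advertised formula $g_{ij}=\Tr(\sigma L_iL_j)$ with $L_i$ read as \emph{symmetric} logarithmic derivatives holds literally only when $[\sigma,\partial_i\rho]=0$ (for genuine SLDs one would in any case get $\tfrac12\Tr(\sigma\{L_i,L_j\})$, the real part). This is a real imprecision in the proposition as stated, and your proposed repairs --- restrict to the commuting case, or reinterpret $L_i$ via the Petz operator-mean representation so that ``Fisher-type'' means ``some monotone metric'' --- are exactly the right ways to make it rigorous; the paper's own remark that the construction ``selects a particular monotone metric'' tacitly concedes the point. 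Two minor quibbles: positive-definiteness of the quadratic form additionally requires the parametrization $\theta\mapsto\rho_C^\circ(\theta)$ to be nondegenerate (an immersion) at $\theta^\star$, otherwise you only get positive semi-definiteness; and the symmetry $\Tr(\partial_j\rho\,L_i)=\Tr(\partial_i\rho\,L_j)$ is cleanest if you cite the symmetry of mixed partials of $\mathcal{F}_C$ or derive it directly from the integral representation rather than asserting it.
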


\begin{remark}
The precise form of the Fisher metric depends on the choice of
monotone metric on the state space; see, for example, Petz's
characterization of monotone metrics on matrix spaces\cite{petz1996}
and standard references on information geometry.\cite{amariNagaoka2000,oizumiTsuchiyaAmari2015}
The present construction selects a particular monotone metric
through the operational choice of self-preservation functional.
\end{remark}

\subsection{Combining gap and metric: a basic response inequality}

Finally we combine the gap-controlled sensitivity from
Proposition~\ref{prop:gap-sensitivity} with the intrinsic metric.

Suppose that the charges $q_a(C)$ are smooth functions of the
stationary state $\rho_C^\circ$, for example
$q_a(C) = \Tr(\rho_C^\circ H_a(C))$. A small change
$\delta\rho_C^\circ$ then induces a change
$\delta q(C)$ and hence a change in self-preservation cost
\[
  \delta^2 \mathcal{F}_C
  \approx \frac{1}{2}\,\delta q(C)^\top g(C)\,\delta q(C).
\]

Combining the Lipschitz continuity of $q(C)$ with the
sensitivity bound~\eqref{eq:basic-sensitivity}, we obtain
schematically
\begin{equation}
  \delta^2 \mathcal{F}_C
  \;\lesssim\;
  \frac{L_C^2}{2\,g(C)^2}\,
  \bigl\|\delta\mathcal{L}_C^\ast(\rho_C^\circ)\bigr\|_1^2,
\end{equation}
where $L_C$ is a Lipschitz constant relating $\delta\rho_C^\circ$
to $\delta q(C)$. Thus contexts with large spectral gap $g(C)$
are intrinsically less sensitive in the self-preservation metric:
they require larger generator perturbations to achieve the same
increase in cost.

\section{N-layer: context graph, ghost couplings, and Laplacians}
\label{sec:N-layer}

This section encodes how contexts are linked, how ghost roles
couple distant situations, and how discrete Laplacians and
global self-consistency conditions arise. At this level, the
N-layer provides the analog of connection between different
``points'' of NQS-geometry.

\subsection{Context graph and ghost couplings}

\begin{definition}[Context graph and weights]
The context space $\mathcal{C}$ is equipped with an undirected graph
structure: we write $C \sim C'$ if the two contexts are directly coupled.
To each edge $(C,C')$ we assign a symmetric non-negative weight
$w_{CC'} = w_{C'C}$ representing the strength of \emph{ghost coupling}
between $C$ and $C'$.
\end{definition}

Intuitively, $w_{CC'}$ measures how easily certain ghost roles can
be shared or transported between $C$ and $C'$ without breaking
self-preservation too strongly.

\begin{definition}[Context Laplacian]
The \emph{context Laplacian} $\Delta$ acts on functions
$f : \mathcal{C} \to \mathbb{R}^d$ by
\begin{equation}
  (\Delta f)(C)
  := \sum_{C' \sim C} w_{CC'}\,\bigl(f(C) - f(C')\bigr).
  \label{eq:context-Laplacian}
\end{equation}
\end{definition}

When $\mathcal{C}$ is large and admits a coarse-grained manifold
description, the operator $\Delta$ approximates a Laplace--Beltrami
operator, and in directed or time-oriented variants it may give rise
to d'Alembert-type operators.

\subsection{Global self-preservation action and discrete field equations}

We now regard the collection of charges $\{q(C)\}$ as a field on the
context graph, and we consider a global self-preservation action.

\begin{definition}[Global self-preservation action]
Given local self-preservation functionals $\mathcal{F}_C$, we define
the \emph{global action}
\begin{equation}
  \mathcal{S}[q]
  := \sum_{C\in\mathcal{C}} \mathcal{F}_C\bigl(q(C);\{q(C')\}_{C'\sim C}\bigr)
     + \frac{1}{2}\sum_{C\sim C'} w_{CC'}\,
       \|q(C) - q(C')\|^2,
  \label{eq:global-action}
\end{equation}
where $\|\cdot\|$ is a fixed inner product on each charge space.
\end{definition}

\begin{proposition}[Discrete self-consistency equations]
A configuration $q^\star$ is a stationary point of
$\mathcal{S}[q]$ if and only if, for each context $C$,
\begin{equation}
  \frac{\partial \mathcal{F}_C}{\partial q_a(C)}(q^\star)
  + \sum_{C'\sim C} w_{CC'} \bigl(q_a^\star(C) - q_a^\star(C')\bigr)
  = 0,
  \qquad a = 1,\dots,r_C.
  \label{eq:discrete-EL}
\end{equation}
Equivalently, in vector notation,
\begin{equation}
  \nabla_{q(C)} \mathcal{F}_C(q^\star)
  + (\Delta q^\star)(C) = 0.
  \label{eq:EL-compact}
\end{equation}
\end{proposition}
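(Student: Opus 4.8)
The plan is to treat this as an elementary first-order optimality computation on a finite-dimensional space, with the only real care needed in the bookkeeping of which summands of $\mathcal{S}$ depend on a given charge component. Since the context graph is finite and each charge space is $\mathbb{R}^{r_C}$, the configuration space $\mathcal{Q}:=\prod_{C}\mathbb{R}^{r_C}$ is finite-dimensional and $\mathcal{S}$ is $C^1$ on it (the $\mathcal{F}_C$ are smooth by hypothesis and the coupling term is a polynomial). Hence $q^\star$ is a stationary point of $\mathcal{S}$ if and only if every partial derivative $\partial\mathcal{S}/\partial q_a(C)$ vanishes at $q^\star$, and both directions of the claimed equivalence reduce to evaluating these partials. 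I would therefore fix a context $C$ and an index $a\in\{1,\dots,r_C\}$ and compute $\partial\mathcal{S}/\partial q_a(C)$ directly.

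First I would differentiate the coupling term. Reading $\tfrac12\sum_{C\sim C'}w_{CC'}\lVert q(C)-q(C')\rVert^2$ as a sum over unordered edges, the variable $q_a(C)$ appears only in the edges incident to $C$; differentiating the incident edge $\{C,C'\}$ gives $w_{CC'}\,(q_a(C)-q_a(C'))$, so that the total contribution is $\sum_{C'\sim C}w_{CC'}\,(q_a(C)-q_a(C'))$. By the definition of the context Laplacian in~\eqref{eq:context-Laplacian} this is exactly $(\Delta q)(C)_a$. (If one instead reads the sum as running over ordered pairs, each edge is counted twice and the factor $\tfrac12$ restores the same expression, so the result is convention-independent.)

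Next I would differentiate the sum $\sum_{D}\mathcal{F}_D$, and here lies the one genuine subtlety that I expect to be the main point to get right. The component $q_a(C)$ enters this sum both through $\mathcal{F}_C$, as its own charge argument, and through each neighbouring functional $\mathcal{F}_{C'}$ with $C'\sim C$, where $q(C)$ appears as a fixed neighbour datum. The stated equation keeps only $\partial\mathcal{F}_C/\partial q_a(C)$, the partial derivative of $\mathcal{F}_C$ in its own first argument with the neighbour charges held fixed, which is precisely the convention used in the definition of self-consistent charges. I would therefore make explicit the standing convention under which the proposition is asserted: the inter-context dependence of each $\mathcal{F}_C$ is carried entirely by the symmetric coupling term already displayed as the second sum in~\eqref{eq:global-action}, so that no reciprocal cross terms $\partial\mathcal{F}_{C'}/\partial q_a(C)$ are double-counted. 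Under this convention the derivative of $\sum_D\mathcal{F}_D$ with respect to $q_a(C)$ is simply $\partial\mathcal{F}_C/\partial q_a(C)$ evaluated at the configuration.

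Finally I would assemble the two contributions. Adding them gives
\[
  \frac{\partial\mathcal{S}}{\partial q_a(C)}
  = \frac{\partial\mathcal{F}_C}{\partial q_a(C)}
    + \sum_{C'\sim C}w_{CC'}\,\bigl(q_a(C)-q_a(C')\bigr),
\]
and setting this to zero for all $C$ and $a$ is exactly~\eqref{eq:discrete-EL}; collecting the indices into vectors and invoking~\eqref{eq:context-Laplacian} once more rewrites it as the compact form~\eqref{eq:EL-compact}. Because stationarity is characterized by the simultaneous vanishing of all these partials, the forward and backward implications hold together, completing the equivalence. I would close by remarking that no convexity or positive-definiteness is needed here; those hypotheses enter only when one upgrades a stationary point to a local minimum, as in the self-consistency and metric discussions of Section~\ref{sec:S-layer}.
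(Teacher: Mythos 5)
The paper states this proposition without any proof, so there is nothing to compare your argument against line by line; your first-order optimality computation is the standard (and essentially the only) way to establish it, and it is correct in substance. What you add that the paper leaves implicit is the genuine subtlety you flag in your second paragraph: since each $\mathcal{F}_{C'}$ is declared to depend on the neighbouring charges $\{q(C)\}_{C\sim C'}$, the partial derivative of $\sum_D \mathcal{F}_D$ with respect to $q_a(C)$ would in general pick up cross terms $\sum_{C'\sim C}\partial \mathcal{F}_{C'}/\partial q_a(C)$ that are absent from \eqref{eq:discrete-EL}. Making explicit the convention that the inter-context dependence is carried entirely by the quadratic coupling term in \eqref{eq:global-action} (or, equivalently, that $V_{\mathrm{neigh}}$ is absorbed into that term) is exactly the right move, and it is a point the paper should have stated; without it the proposition is false as literally written.

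One detail in your computation of the coupling term is wrong, though it does not affect the final result under the reading you adopt. With the prefactor $\tfrac12$ fixed, the unordered-edge and ordered-pair readings of $\tfrac12\sum_{C\sim C'}w_{CC'}\lVert q(C)-q(C')\rVert^2$ are \emph{not} equivalent: over unordered edges the derivative with respect to $q_a(C)$ is $\sum_{C'\sim C}w_{CC'}(q_a(C)-q_a(C'))=(\Delta q)(C)_a$, whereas over ordered pairs each edge contributes twice and you obtain $2(\Delta q)(C)_a$. Your parenthetical claims the $\tfrac12$ "restores the same expression," but it only converts the ordered sum into the unordered sum \emph{of the undoubled quadratic}, whose derivative still carries the extra factor of $2$ from the chain rule. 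The paper's own Markov-chain example in Section~\ref{sec:examples}, where $\tfrac{w+\lambda}{2}(p_1-p_2)^2$ differentiates to $(w+\lambda)(p_1-p_2)$, confirms that the intended convention is the unordered-edge one, so your main line of argument and the conclusion \eqref{eq:EL-compact} stand; just delete or correct the convention-independence remark.
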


\begin{remark}
Equation~\eqref{eq:EL-compact} is the discrete analog of a balance
between local self-preservation forces (the gradient of $\mathcal{F}_C$)
and elastic forces generated by the context Laplacian. In coarse-grained
limits, it takes the form of elliptic or hyperbolic field equations
for the charge fields $q(x)$ on emergent manifolds.
Such relations resonate with programmes that derive gravitational
field equations from information-theoretic or entanglement-based
principles.\cite{blancoCasiniHungMyers2013,swingle2012,jacobson2016,caoCarroll2022}
\end{remark}

\subsection{Toward Einstein-type equations in continuum limits}

When the context graph can be approximated by a smooth manifold $M$
and the charge fields $q(C)$ by smooth fields $q(x)$ on $M$, we expect
a continuum limit of~\eqref{eq:EL-compact} of the schematic form
\begin{equation}
  \nabla_q \mathcal{F}(q(x))
  + \kappa\,\Box q(x) = 0,
  \label{eq:continuum-EL}
\end{equation}
where $\Box$ denotes a Laplace--Beltrami or d'Alembertian operator
with respect to an emergent metric $g_{\mu\nu}(x)$.

In the NQS-geometry perspective, the emergent metric itself is
constructed from the Hessian of $\mathcal{F}$ with respect to charges,
as in~\eqref{eq:NQS-metric}, and the Einstein-type equations for
$g_{\mu\nu}$ arise as consistency conditions linking
\begin{itemize}[leftmargin=2em]
  \item the curvature of this metric,
  \item the global self-preservation constraints~\eqref{eq:continuum-EL},
  \item and the flows of Cartan charges across context space.
\end{itemize}
A detailed derivation of such Einstein-type equations is beyond the
scope of this first paper, but~\eqref{eq:global-action}--\eqref{eq:EL-compact}
provide the discrete prototype of these relations.

\section{Questioning, transport, and NQS-curvature}
\label{sec:curvature}

This section defines an analog of connection and curvature,
based on how small questioning perturbations propagate
through N, Q, and S layers along context paths.

\subsection{Sensitivity triples as tangent objects}

A small change in openness $dN$ induces a change in channels $d\Phi$
and hence in stationary states $d\rho^\circ$. We treat the triple
$(dN, d\Phi, d\rho^\circ)$ as an intrinsic tangent-like object at a context.

\begin{definition}[Sensitivity triple]
At context $C$, a \emph{sensitivity triple} is a triple
$(dN, d\Phi^{(C)}, d\rho_C^\circ)$ such that
\begin{equation}
  \mathcal{L}_C^\ast(d\rho_C^\circ)
  = -\,d\mathcal{L}_C^\ast(\rho_C^\circ),
  \qquad d\mathcal{L}_C = \Xi_C(dN),
  \label{eq:sensitivity-triple}
\end{equation}
for some linear response map
$\Xi_C : \mathrm{T}_N \to \mathrm{End}(\mathcal{A}_C)$
encoding how changes in openness modify the generator.
We denote by $T_C$ the vector space of such triples at $C$.
\end{definition}

\begin{remark}
The map $\Xi_C$ may be interpreted as an ``operational susceptibility''
of the agent or system: it specifies which generator perturbations
are reachable by modulating openness $N$.
\end{remark}

\subsection{Connection along context paths}

\begin{definition}[NQS-connection]
An \emph{NQS-connection} is a prescription which, to each oriented edge
$C \to C'$ in the context graph, associates a linear map
\[
  \Gamma_{C\to C'} :
  T_C \to T_{C'},
\]
such that along a path $C_0 \to C_1 \to \cdots \to C_n$
we can transport sensitivity data by composition
$\Gamma_{C_{n-1}\to C_n} \cdots \Gamma_{C_0\to C_1}$.
\end{definition}

Operationally, $\Gamma_{C\to C'}$ is determined by how
a change in openness and generator at $C$ is perceived or implemented
when the agent moves to $C'$, together with the ghost couplings
between the two contexts.

\subsection{Curvature as questioning non-commutativity}

\begin{definition}[NQS-curvature]
Given a closed loop
$\gamma : C_0 \to C_1 \to \cdots \to C_{k} = C_0$
in the context graph, the \emph{NQS-curvature operator} at $C_0$
along $\gamma$ is defined by
\begin{equation}
  \mathcal{R}(\gamma)
  := \Gamma_{C_{k-1}\to C_k}\cdots\Gamma_{C_0\to C_1}
     - \mathbbm{1}_{T_{C_0}},
  \label{eq:NQS-curvature}
\end{equation}
where $\mathbbm{1}_{T_{C_0}}$ is the identity on $T_{C_0}$.
The failure of $\mathcal{R}(\gamma)$ to vanish measures the
non-commutativity of questioning along the loop.
\end{definition}

\begin{remark}
In continuum limits where $T_C$ can be identified with ordinary
tangent spaces and $\Gamma_{C\to C'}$ with parallel transport
along infinitesimal steps,~\eqref{eq:NQS-curvature} reduces to
the usual holonomy operator associated with a connection, and
its infinitesimal form yields a Riemann curvature tensor.
\end{remark}

\subsection{A qutrit example: holonomy between two contexts}

To illustrate NQS-curvature in a concrete setting, consider two contexts
$C_1$ and $C_2$ describing a single qutrit pointer with internal
$\mathrm{SU}(3)$ symmetry. For each $i=1,2$ we take
$\mathcal{A}_{C_i} = M_3(\mathbb{C})$ and choose Cartan generators
$H_3^{(i)},H_8^{(i)}$ corresponding to the Gell-Mann matrices
$\lambda_3,\lambda_8$. We denote by
$q^{(i)} = (q^{(i)}_3,q^{(i)}_8)$
the associated Cartan charges in the stationary state $\rho^\circ_{C_i}$.

We assume that the two contexts are coupled by a partial swap channel
\begin{equation}
  \Psi_\theta(\rho)
  = (1-\theta)\,\rho + \theta\,U_{\mathrm{swap}}\rho\,U_{\mathrm{swap}}^\dagger,
  \qquad 0 < \theta \ll 1,
  \label{eq:partial-swap}
\end{equation}
where $U_{\mathrm{swap}}$ exchanges a pair of basis states and can be
implemented as a fragment of an $\mathrm{SU}(3)$ rotation.
Operationally, $\Psi_\theta$ realizes a weak ghost coupling between $C_1$
and $C_2$.

A simple questioning loop $\gamma$ of the form
\begin{equation}
  C_1 \xrightarrow{\;\Psi_\theta\;} C_2
      \xrightarrow{\;\Psi_{-\theta}\;} C_1
  \label{eq:qutrit-loop}
\end{equation}
transports a sensitivity triple $(dN,d\Phi^{(1)},d\rho_{C_1}^\circ)$ from $C_1$
to $C_2$ and back again. At the level of charges, we may write
the induced parallel transport as
\[
  q^{(1)} \mapsto q^{(2)} = U(\theta)\,q^{(1)}, \qquad
  q^{(2)} \mapsto \tilde{q}^{(1)} = U(-\theta)\,q^{(2)},
\]
where $U(\theta)$ is a $2\times 2$ matrix describing how the Cartan
charges transform under the partial swap~\eqref{eq:partial-swap}.
Expanding to second order in $\theta$, we obtain
\begin{equation}
  \tilde{q}^{(1)} - q^{(1)}
  = \theta^2\,K\,q^{(1)} + O(\theta^3),
  \label{eq:qutrit-holonomy}
\end{equation}
for some $2\times 2$ matrix $K$ determined by the commutator structure
of the Cartan generators with the swap part of $U_{\mathrm{swap}}$.

\begin{proposition}[Qutrit holonomy as NQS-curvature]
In the above setting, the deviation
$\tilde{q}^{(1)} - q^{(1)}$ in~\eqref{eq:qutrit-holonomy}
is the action of the NQS-curvature operator $\mathcal{R}(\gamma)$,
restricted to charge space. In particular, $K$ encodes an
$\mathrm{SU}(3)$ holonomy phase associated with the loop~\eqref{eq:qutrit-loop}.
\end{proposition}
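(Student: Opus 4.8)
The plan is to reduce the statement to a computation on the two-dimensional charge space and to recognize the outcome as the charge-space restriction of the NQS-curvature operator \eqref{eq:NQS-curvature}. First I would note that one half of the claim is essentially definitional: along the loop \eqref{eq:qutrit-loop} the charge data transports as $q^{(1)} \mapsto U(\theta)\,q^{(1)} \mapsto U(-\theta)U(\theta)\,q^{(1)} = \tilde{q}^{(1)}$, so that, writing $\Gamma_{C_1\to C_2}|_{\mathrm{ch}} = U(\theta)$ and $\Gamma_{C_2\to C_1}|_{\mathrm{ch}} = U(-\theta)$ for the restrictions of the NQS-connection $\Gamma_{C\to C'}$ to charge space, Definition \eqref{eq:NQS-curvature} gives $\mathcal{R}(\gamma)|_{\mathrm{ch}} = U(-\theta)U(\theta) - \mathbbm{1}$. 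Hence $\tilde{q}^{(1)} - q^{(1)} = \mathcal{R}(\gamma)|_{\mathrm{ch}}\,q^{(1)}$ exactly, which is the first assertion.

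Next I would compute $U(\theta)$ explicitly from the partial swap \eqref{eq:partial-swap}. Applying $\Psi_\theta$ to the stationary state and reading off charges gives $q_a(\theta) = \Tr(\Psi_\theta(\rho^\circ_{C_1})H_a) = (1-\theta)q_a + \theta\,\Tr(\rho^\circ_{C_1}\,U_{\mathrm{swap}}^\dagger H_a U_{\mathrm{swap}})$. Writing $\tilde H_a := U_{\mathrm{swap}}^\dagger H_a U_{\mathrm{swap}}$ and decomposing it into its Cartan and off-Cartan parts, the off-Cartan part traces to zero against the Cartan-diagonal stationary state $\rho^\circ_{C_1}$ (as holds for a grand-canonical charge state), so $\Tr(\rho^\circ_{C_1}\tilde H_a) = \sum_b M_{ab}\,q_b$, where $M$ is the matrix of the adjoint action of $U_{\mathrm{swap}}$ on $\mathrm{span}\{H_3,H_8\}$. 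This yields the affine transport $U(\theta) = \mathbbm{1} + \theta A$ with $A = M - \mathbbm{1}$; for the basic $1\leftrightarrow 2$ swap one has $\tilde H_3 = -H_3$ and $\tilde H_8 = H_8$, i.e. $M = \mathrm{diag}(-1,1)$ and $A = \mathrm{diag}(-2,0)$.

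I would then compose the two legs and expand. Since the backward leg uses $\Psi_{-\theta}$, it contributes $U(-\theta) = \mathbbm{1} - \theta A$, whence $U(-\theta)U(\theta) = \mathbbm{1} - \theta^2 A^2 + O(\theta^3)$: the linear terms cancel automatically, confirming that the deviation starts at second order as in \eqref{eq:qutrit-holonomy}, with $K = -A^2$ (more generally $K = 2B - A^2$ when the transport carries a genuine quadratic part $B$, which vanishes here because $U(\theta)$ is affine in $\theta$). The entries of $K$ are thus built from $M$, hence from the double action of $U_{\mathrm{swap}}$ on the Cartan generators — precisely the commutator structure named in the statement — and $M$ is the restriction of the $\mathrm{SU}(3)$ adjoint (Weyl) action to the Cartan plane, so $K$ records the corresponding holonomy.

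The main obstacle is not the algebra but the identification step: making rigorous the claim that the abstractly defined sensitivity-triple transport $\Gamma_{C\to C'}$, restricted to the charge coordinates $q_a = \Tr(\rho^\circ H_a)$, really coincides with the concrete map $U(\theta)$ obtained from $\Psi_\theta$. This requires checking that the induced transport of the full triple $(dN, d\Phi^{(C)}, d\rho_C^\circ)$ projects consistently onto charges and that the charge projection commutes with the composition defining $\mathcal{R}(\gamma)$, so that $\mathcal{R}(\gamma)|_{\mathrm{ch}}$ is well defined. A secondary subtlety is justifying the word ``phase'': genuine off-diagonal entries in $K$, mixing $q_3$ and $q_8$, arise only when $U_{\mathrm{swap}}$ is a generic $\mathrm{SU}(3)$ fragment rather than a pure basis swap (for which $K$ is a pure scaling), and one must verify that the double commutators $[G,[G,H_a]]$ of the swap generator $G$ with the Cartan generators carry the Cartan component needed to realize the claimed $\mathrm{SU}(3)$ holonomy.
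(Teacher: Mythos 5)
The paper offers no proof of this proposition: it is asserted directly on the basis of the preceding setup, where \eqref{eq:qutrit-holonomy} itself is introduced with ``expanding to second order we obtain'' and $K$ is left unspecified. Your first paragraph therefore captures everything the paper implicitly relies on --- the identification $\mathcal{R}(\gamma)|_{\mathrm{ch}} = U(-\theta)U(\theta) - \mathbbm{1}$ is just Definition \eqref{eq:NQS-curvature} read on charge space --- and your remaining paragraphs go strictly beyond the paper by actually computing the transport. That computation is correct: for the pure $1\leftrightarrow 2$ swap, conjugation sends $\lambda_3 \mapsto -\lambda_3$ and fixes $\lambda_8$, so $M = \mathrm{diag}(-1,1)$, $U(\theta) = \mathbbm{1} + \theta(M-\mathbbm{1})$, and $U(-\theta)U(\theta) = \mathbbm{1} - \theta^2 A^2$ exactly, giving $K = \mathrm{diag}(-4,0)$. (For this particular swap your worry about off-Cartan components is moot, since a permutation conjugate of a diagonal matrix is diagonal, so $M$ is well defined on all states.)

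The two caveats you raise at the end are real, but they cut against the proposition as stated rather than against your proof. First, the identification of the abstract transport $\Gamma_{C\to C'}$ on sensitivity triples with the concrete charge map $U(\theta)$ is nowhere established in the paper either; the paper simply writes $q^{(2)} = U(\theta)\,q^{(1)}$, so this is a modelling assumption, not something you failed to supply. Second, and more substantively, your computation shows that for the literal channel \eqref{eq:partial-swap} the deviation $K = -A^2$ is a diagonal contraction of $q_3$ arising solely from the fact that $U(-\theta) \ne U(\theta)^{-1}$ (the transport is affine in $\theta$, not a one-parameter group); no commutator of Cartan generators with a swap generator enters, and nothing mixes $q_3$ with $q_8$. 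The paper's assertions that $K$ is ``determined by the commutator structure'' and ``encodes an $\mathrm{SU}(3)$ holonomy phase'' therefore hold only if $U_{\mathrm{swap}}$ is taken to be a generic $\mathrm{SU}(3)$ fragment $e^{\theta G}$ with $G$ not normalizing the Cartan subalgebra, exactly as you note. In short: your argument is sound and more explicit than anything in the paper; it is the proposition's gloss on $K$ that needs an extra hypothesis to be literally true.
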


\begin{remark}
Although this example is deliberately simple, it shows how
\emph{holonomies of contextual transport} can induce measurable
shifts in internal charges, providing an operational handle on
NQS-curvature. In more elaborate models, similar loops may probe
holonomies associated with larger $\mathrm{SU}(3)$ subgroups or
with extended networks of contexts.
\end{remark}

\section{Relation to Riemannian and information geometry}
\label{sec:relation}

In this section we relate the intrinsic NQS-geometry introduced above
to more familiar structures from Riemannian geometry and information
geometry. The central idea is that
\begin{itemize}[leftmargin=2em]
  \item the Hessian metric $g_{ab}(C)$ generalizes Fisher-type
        information metrics on statistical manifolds,
  \item the context Laplacian $\Delta$ induces Laplace--Beltrami
        operators on coarse-grained manifolds,
  \item NQS-curvature reduces to Riemannian curvature when contexts
        parametrize ordinary space--time points.
\end{itemize}

\subsection{Classical Riemannian limit on charge manifolds}

We first consider a situation where the space of possible charges
$\mathcal{Q}_C \subset \mathbb{R}^{r_C}$ at a fixed context $C$
forms a smooth manifold, and the self-preservation functional
$\mathcal{F}_C$ depends only on $q(C)$.

\begin{definition}[Charge manifold]
Let $C$ be a fixed context and assume that the set of admissible
charges $\mathcal{Q}_C$ is a smooth submanifold of $\mathbb{R}^{r_C}$.
We regard $(\mathcal{Q}_C,g(C))$ as a Riemannian manifold, where
$g(C)$ is the intrinsic NQS-metric defined by~\eqref{eq:NQS-metric},
restricted to tangent vectors in $T_q\mathcal{Q}_C$.
\end{definition}

In local coordinates $\xi = (\xi^1,\dots,\xi^d)$ on $\mathcal{Q}_C$,
the metric components are
\[
  g_{ij}^{(\xi)}(q)
  = \frac{\partial q_a}{\partial \xi^i}\,
    g_{ab}(C)\,
    \frac{\partial q_b}{\partial \xi^j},
\]
where repeated indices are summed over.

\begin{proposition}[Geodesics as optimal self-preservation paths]
Small deformations of charges $q(\tau)$ on $\mathcal{Q}_C$, parametrized
by $\tau\in[0,1]$, incur a second-order self-preservation cost
\[
  \delta^2 \mathcal{F}_C
  \approx \frac{1}{2}\int_0^1
    \dot{q}(\tau)^\top g(C)\,\dot{q}(\tau)\,\mathrm{d}\tau.
\]
In this approximation, curves that minimize $\delta^2\mathcal{F}_C$
for fixed endpoints are precisely the geodesics of the Riemannian
manifold $(\mathcal{Q}_C,g(C))$.
\end{proposition}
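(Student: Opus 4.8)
The plan is to reduce the assertion to a standard variational problem on the Riemannian manifold $(\mathcal{Q}_C, g(C))$ and then invoke the characterization of geodesics as critical points of the energy functional. First I would make the cost formula precise by discretizing the curve: partition $[0,1]$ into $n$ subintervals, set $q_k = q(k/n)$, and treat the resulting chain $q_0,\dots,q_n$ as a finite subgraph of contexts carrying the local stiffness $g(C)$, in exact analogy with the elastic couplings $\tfrac12 w_{CC'}\|q(C)-q(C')\|^2$ of the global action~\eqref{eq:global-action}. Assigning each link the second-order Hessian cost $\tfrac{n}{2}(q_k-q_{k-1})^\top g(C)\,(q_k-q_{k-1})$ and using $q_k - q_{k-1} \approx \dot q(\tau)/n$, the sum converges as $n\to\infty$ to
\[
  \delta^2\mathcal{F}_C \;\approx\; \frac{1}{2}\int_0^1 \dot q(\tau)^\top g(C)\,\dot q(\tau)\,\mathrm{d}\tau \;=:\; E[q],
\]
the Riemannian energy of the curve, with the metric components evaluated along the curve as in the local-coordinate expression preceding the statement. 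This identifies the object to be minimized with the energy functional of $(\mathcal{Q}_C, g(C))$.

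Next I would compute the first variation of $E$ over curves with fixed endpoints $q(0),q(1)$. Taking a variation $q_s(\tau)$ with $\partial_s q_s|_{s=0} = \eta(\tau)$ and $\eta(0)=\eta(1)=0$, differentiating under the integral, and integrating by parts in $\tau$ (the boundary terms vanishing by the endpoint conditions) yields, in local coordinates $\xi^i$ on $\mathcal{Q}_C$, the Euler--Lagrange system
\[
  \ddot\xi^{\,i} + \Gamma^i_{jk}(\xi)\,\dot\xi^{\,j}\dot\xi^{\,k} = 0,
\]
where $\Gamma^i_{jk}$ are the Christoffel symbols of the Levi--Civita connection of $g(C)$, built from the first derivatives of $g_{ij}^{(\xi)}$. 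These are exactly the geodesic equations, so the critical points of $E$ are precisely the constant-speed geodesics of $(\mathcal{Q}_C, g(C))$; since any smooth minimizer with fixed endpoints is in particular a critical point, every minimizer is a geodesic.

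The step I expect to require the most care is the converse direction and the distinction between the energy and the length functionals. A geodesic is only guaranteed to be \emph{locally} length-minimizing, so the clean equivalence holds at the level of critical points of $E$, and I would state the minimization claim with the standard caveats. By Cauchy--Schwarz one has $L[q]^2 \le 2\,E[q]$, with equality iff $q$ is parametrized at constant speed; since $L$ is reparametrization-invariant while $E$ is not, the energy-minimizers are exactly the constant-speed reparametrizations of the length-minimizing curves. Existence of a genuine minimizer then requires an additional hypothesis --- geodesic completeness of $(\mathcal{Q}_C, g(C))$, or restriction to a geodesically convex neighbourhood of the self-consistent point $q^\star$ --- within which the unique minimizer is the geodesic joining the prescribed endpoints, giving the asserted equivalence in the stated second-order approximation.
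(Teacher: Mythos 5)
The paper states this proposition without any proof, so there is nothing to compare against directly; your argument is the standard one and is essentially correct. The second half --- first variation of the energy functional $E[q]=\tfrac12\int_0^1 \dot q^\top g(C)\,\dot q\,\mathrm{d}\tau$ in local coordinates $\xi^i$ on the submanifold $\mathcal{Q}_C$, integration by parts, Euler--Lagrange system $\ddot\xi^i+\Gamma^i_{jk}\dot\xi^j\dot\xi^k=0$ --- is exactly what is needed, and you are right that the Christoffel symbols are generically nonzero here even though $g_{ab}(C)$ is a constant matrix on the ambient charge space, because the induced components $g^{(\xi)}_{ij}$ depend on $\xi$ through the embedding of $\mathcal{Q}_C$. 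Your care with the energy-versus-length distinction and with the local/global minimization caveat is a genuine improvement on the statement, which says ``precisely the geodesics'' without qualification; as you note, the clean equivalence is at the level of critical points of $E$, with minimizers being constant-speed geodesics only within a geodesically convex neighbourhood. The one place where your argument is a modelling choice rather than a deduction is the first step: the paper never defines the ``second-order self-preservation cost'' of a \emph{curve}, only of a single displacement $\delta q$, so the integral formula cannot be derived from the paper's definitions --- your elastic-chain discretization with link cost $\tfrac{n}{2}(q_k-q_{k-1})^\top g(C)(q_k-q_{k-1})$ is a reasonable reading consistent with the neighbour terms of the global action~\eqref{eq:global-action}, but the factor $n$ is chosen to make the continuum limit come out right, so this part justifies the ``$\approx$'' as a postulate rather than proving it. Given that the proposition itself is stated only heuristically, this is acceptable, but it is worth flagging explicitly that the first displayed formula is where the physical input enters.
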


\begin{remark}
Thus the intrinsic metric $g(C)$ not only measures local sensitivity
but also determines the preferred interpolation paths between
charge configurations, in analogy with geodesics in classical
Riemannian geometry.
\end{remark}

\subsection{Information-geometric limit: Fisher and monotone metrics}

We now specialize to cases where charges $q(C)$ are in one-to-one
correspondence with parameters $\theta$ of a family of stationary
states $\rho_C^\circ(\theta)$, and the self-preservation functional
is derived from an information-theoretic divergence.

\begin{definition}[Information-geometric self-preservation]
Let $\theta \mapsto \rho_C^\circ(\theta)$ be a smooth family of states.
We define
\begin{equation}
  \mathcal{F}_C(\theta)
  := D\bigl(\rho_C^\circ(\theta)\,\big\|\,\rho_C^\circ(\theta^\star)\bigr),
  \label{eq:info-F}
\end{equation}
where $D(\rho\|\sigma)$ is a quantum divergence (e.g.\ relative entropy)
and $\theta^\star$ is a self-consistent parameter.
\end{definition}

In this setting, the Hessian metric
\[
  g_{ij}(C)
  = \left.
    \frac{\partial^2 \mathcal{F}_C(\theta)}
         {\partial\theta^i\,\partial\theta^j}
    \right|_{\theta=\theta^\star}
\]
is a quantum information metric. For relative entropy and commuting
families, it reduces to the classical Fisher information metric.

\begin{proposition}[Classical Fisher limit]
Assume that the states $\rho_C^\circ(\theta)$ commute for all $\theta$,
so that we can write
\[
  \rho_C^\circ(\theta)
  = \sum_x p_\theta(x)\,\dyad{x}
\]
in a fixed orthonormal basis.
If $\mathcal{F}_C(\theta)$ is given by~\eqref{eq:info-F}
with quantum relative entropy, then
\[
  g_{ij}(C) =
  \sum_x p_{\theta^\star}(x)\,
  \partial_i\log p_{\theta}(x)\big|_{\theta^\star}\,
  \partial_j\log p_{\theta}(x)\big|_{\theta^\star},
\]
the classical Fisher information matrix of the family $p_\theta$.
\end{proposition}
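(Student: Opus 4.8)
The plan is to reduce the quantum relative entropy to the classical Kullback--Leibler divergence using the commutativity hypothesis, and then to carry out the standard information-geometric Hessian computation on the resulting classical functional. First I would note that because every $\rho_C^\circ(\theta)$ is diagonal in the same fixed basis $\{\ket{x}\}$, both $\rho_C^\circ(\theta)$ and $\log\rho_C^\circ(\theta)$ are diagonal, with diagonal entries $p_\theta(x)$ and $\log p_\theta(x)$ respectively, and likewise at $\theta^\star$. Expanding $D(\rho\|\sigma)=\Tr(\rho\log\rho-\rho\log\sigma)$ and taking the trace in this common eigenbasis then gives
\[
  \mathcal{F}_C(\theta)
  = \sum_x p_\theta(x)\bigl(\log p_\theta(x)-\log p_{\theta^\star}(x)\bigr)
  = D_{\mathrm{KL}}\bigl(p_\theta \,\big\|\, p_{\theta^\star}\bigr),
\]
so that from this point on the problem is purely classical and the quantum structure has been fully discharged.

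Next I would differentiate twice in $\theta$ and evaluate at $\theta^\star$. The first derivative is
\[
  \partial_i \mathcal{F}_C(\theta)
  = \sum_x \partial_i p_\theta(x)\,\bigl(\log p_\theta(x)-\log p_{\theta^\star}(x)\bigr),
\]
where the companion term $\sum_x p_\theta(x)\,\partial_i\log p_\theta(x)=\partial_i\sum_x p_\theta(x)=0$ has dropped out by normalization. Differentiating once more, the contribution carrying $\partial_j\partial_i p_\theta(x)$ is multiplied by the log-ratio $\log p_\theta(x)-\log p_{\theta^\star}(x)$, which vanishes identically at $\theta=\theta^\star$; hence only
\[
  g_{ij}(C)
  = \sum_x \partial_i p_\theta(x)\,\partial_j\log p_\theta(x)\Big|_{\theta^\star}
  = \sum_x p_{\theta^\star}(x)\,\partial_i\log p_\theta(x)\big|_{\theta^\star}\,
    \partial_j\log p_\theta(x)\big|_{\theta^\star}
\]
survives, after rewriting $\partial_i p_\theta = p_\theta\,\partial_i\log p_\theta$. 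This is precisely the claimed Fisher information matrix, and the expression is manifestly symmetric in $i,j$ as a metric must be.

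The main subtlety is not the algebra but the role played by the \emph{fixed} basis: the reduction from the quantum relative entropy to the classical divergence requires that the whole family be diagonal in one and the same $\theta$-independent basis, not merely that the states mutually commute. If the eigenbasis rotated with $\theta$, the symmetric logarithmic derivatives $L_i$ of the previous proposition would acquire off-diagonal coherence contributions, and the resulting SLD-Fisher metric would differ from the classical Fisher matrix by exactly those terms; it is the stationarity of the eigenprojectors that suppresses them here. I would therefore state the fixed-basis hypothesis explicitly, and remark that finite-dimensionality makes every interchange of the finite sums with the $\theta$-derivatives automatic, so that no regularity beyond the assumed smoothness of $\theta\mapsto p_\theta$ (and strict positivity of $p_{\theta^\star}$, ensuring the logarithms are well defined) is needed.
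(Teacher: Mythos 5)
Your proof is correct, and it supplies an argument the paper itself omits: the proposition is stated in Section~\ref{sec:relation} without any proof. Your route --- reducing the quantum relative entropy to the classical Kullback--Leibler divergence via the common eigenbasis and then computing the Hessian, with the cross term killed by normalization and the second-derivative term killed by the vanishing log-ratio at $\theta=\theta^\star$ --- is the standard computation the statement clearly intends, and your remark distinguishing ``mutually commuting'' from ``diagonal in a single $\theta$-independent basis'' (together with the strict-positivity caveat on $p_{\theta^\star}$) is a worthwhile precision that the paper's hypothesis glosses over.
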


In the fully quantum case, different choices of divergence $D$
select different monotone metrics in the sense of Petz. The
self-preservation principle thus acts as an \emph{operational
selection principle} for information metrics.

\begin{remark}
From this viewpoint, NQS-geometry offers a unifying language in which
Fisher metrics, Bures--Helstrom metrics, and Fubini--Study metrics
appear as special cases corresponding to particular self-preservation
functionals and charge parametrizations.
\end{remark}

\subsection{Emergent Laplace--Beltrami and d'Alembert operators}

We now consider the global context space $\mathcal{C}$ and its
coarse-grained manifold limits.

Assume that $\mathcal{C}$ can be approximated by a smooth manifold $M$
and that the weights $w_{CC'}$ are such that the context Laplacian
$\Delta$ converges to a second-order differential operator on $M$.
Under mild regularity conditions, this limit operator is a
Laplace--Beltrami operator associated with an effective metric
$g_{\mu\nu}(x)$ on $M$.

In time-oriented settings where edges carry directions and are
interpreted as causal relations, an analogous limit leads to
d'Alembert-type operators $\Box_g$ on Lorentzian manifolds.
In both cases, the local NQS-metric $g_{ab}(C)$ on charge space
and the global operator $\Delta$ (or $\Box_g$) on context space
jointly determine the effective dynamics of charge fields $q(x)$.

\subsection{NQS-curvature and Riemann curvature}

Finally, when contexts are labelled by points $x\in M$ of an emergent
manifold, and the sensitivity space $T_C$ can be identified with the
tangent space $T_xM$, the NQS-connection $\Gamma_{C\to C'}$ becomes
a parallel transport operator along curves in $M$, and the NQS-curvature
$\mathcal{R}(\gamma)$ reduces to the usual holonomy of a connection.

In the infinitesimal limit, the operator-valued curvature
$\mathcal{R}(\gamma)$ defines a tensor
$R^\mu{}_{\nu\alpha\beta}$ on $M$ via
\[
  \mathcal{R}(\gamma_{\alpha\beta}) v^\mu
  \approx R^\mu{}_{\nu\alpha\beta} v^\nu,
\]
where $\gamma_{\alpha\beta}$ is a small loop in the $(\alpha,\beta)$
plane and $v$ is a tangent vector. In this way, the abstract NQS-curvature
captures and generalizes the familiar Riemann curvature tensor.

\begin{remark}
From the NQS perspective, Riemann curvature is one manifestation of a
more general notion of questioning non-commutativity: it is what remains
when we restrict attention to contexts that are already organized as
space--time points and to sensitivity triples that probe only geometric
degrees of freedom.
\end{remark}

\section{Simple illustrative examples}
\label{sec:examples}

We now present two minimal examples in which the abstract definitions
of NQS-geometry can be computed explicitly:
a finite Markov chain as a one-dimensional context line, and
a qutrit pointer with $\mathrm{SU}(3)$ internal symmetry.

\subsection{Finite Markov chain as a context line}

Consider a chain of three contexts $\mathcal{C} = \{C_1,C_2,C_3\}$
with nearest-neighbour couplings,
\[
  C_1 \sim C_2 \sim C_3, \qquad C_1 \not\sim C_3.
\]
We assign weights $w_{12}=w_{21}=w_{23}=w_{32}=w>0$ and
$w_{13}=w_{31}=0$.

\paragraph{Q-layer.}
At each context $C_i$ we consider a classical two-state system
with probabilities $(p_i,1-p_i)$ and a simple symmetric Markov
generator
\[
  \mathcal{L}_{C_i}(\rho)
  = \kappa_i \bigl( \rho^\circ_{C_i} - \rho \bigr),
\]
where $\rho^\circ_{C_i}$ is the stationary distribution and
$\kappa_i>0$ is a relaxation rate. The spectral gap is
$g(C_i)=\kappa_i$.

\paragraph{S-layer.}
We take a single charge $q(C_i)=p_i$ at each context and define
\[
  \mathcal{F}_{C_i}(p_i;\{p_j\}_{j\sim i})
  = \frac{1}{2}\,k_i\,(p_i-\bar{p}_i)^2
    + \frac{\lambda}{2}\sum_{j\sim i} (p_i-p_j)^2,
\]
with stiffness $k_i>0$, preferred value $\bar{p}_i$, and coupling
strength $\lambda\ge 0$.

The intrinsic metric at $C_i$ is simply
\[
  g(C_i) = \left.\frac{\partial^2\mathcal{F}_{C_i}}{\partial p_i^2}
          \right|_{p=\bar{p}}
         = k_i + \lambda\,\deg(C_i),
\]
where $\deg(C_1)=\deg(C_3)=1$ and $\deg(C_2)=2$.

\paragraph{N-layer.}
The context Laplacian~\eqref{eq:context-Laplacian} acts on
$f:\{1,2,3\}\to\mathbb{R}$ as
\[
  (\Delta f)(1) = w\,(f(1)-f(2)),\quad
  (\Delta f)(2) = w\,(2f(2)-f(1)-f(3)),\quad
  (\Delta f)(3) = w\,(f(3)-f(2)).
\]

The global action~\eqref{eq:global-action} becomes
\[
  \mathcal{S}[p]
  = \sum_{i=1}^3 \left[
      \frac{1}{2}k_i (p_i-\bar{p}_i)^2
    \right]
    + \frac{w+\lambda}{2}\bigl[(p_1-p_2)^2 + (p_2-p_3)^2\bigr].
\]

Stationarity $\delta\mathcal{S}/\delta p_i = 0$ yields the discrete
self-consistency equations
\begin{align*}
  &k_1(p_1-\bar{p}_1) + (w+\lambda)(p_1-p_2) = 0, \\
  &k_2(p_2-\bar{p}_2) + (w+\lambda)\bigl(2p_2 - p_1 - p_3\bigr) = 0, \\
  &k_3(p_3-\bar{p}_3) + (w+\lambda)(p_3-p_2) = 0.
\end{align*}

\begin{remark}
In the limit of many contexts arranged on a line or lattice,
with appropriate rescaling of $w+\lambda$, these equations converge
to a second-order differential equation of the form
\[
  k(x)\bigl(p(x)-\bar{p}(x)\bigr) + \kappa\,\partial_x^2 p(x) = 0,
\]
illustrating how NQS-geometry reproduces familiar diffusion or
elasticity equations in coarse-grained limits.
\end{remark}

\subsection{A qutrit pointer with $\mathrm{SU}(3)$ internal symmetry}

We revisit the qutrit setting from Section~\ref{sec:curvature} but
now focus on the intrinsic metric on Cartan charge space.

\paragraph{Setup.}
Let $\mathcal{A}_C = M_3(\mathbb{C})$ and choose Cartan generators
$H_3 = \lambda_3$, $H_8 = \lambda_8$, where $\lambda_3,\lambda_8$
are the standard Gell-Mann matrices. For a state $\rho$ we define
\[
  q_3 = \Tr(\rho H_3),\qquad
  q_8 = \Tr(\rho H_8),
\]
and write $q = (q_3,q_8)$.

\paragraph{Self-preservation functional.}
We take a quadratic self-preservation functional
\[
  \mathcal{F}_C(q)
  = \frac{1}{2}\,(q-q^\star)^\top K\,(q-q^\star),
\]
with positive definite $2\times 2$ matrix $K$ and preferred charges
$q^\star$ determined by the stationary state $\rho_C^\circ$.

The intrinsic metric at $C$ is then
\[
  g_{ab}(C) = K_{ab},
\]
independent of $q$.

\paragraph{Information-geometric interpretation.}
Suppose now that $\rho$ belongs to an $\mathrm{SU}(3)$-covariant
family of Gibbs states
\[
  \rho(\beta,h)
  = \frac{1}{Z(\beta,h)}\exp\bigl(-\beta(H_3 h_3 + H_8 h_8)\bigr),
\]
with inverse temperature $\beta$ and external ``fields'' $h=(h_3,h_8)$.
The charges $q_a(\beta,h)$ are expectation values of $H_a$.

Choosing the self-preservation functional as
\[
  \mathcal{F}_C(h)
  = D\bigl(\rho(\beta,h)\,\big\|\,\rho(\beta,h^\star)\bigr),
\]
with quantum relative entropy, we obtain at $h=h^\star$ the Hessian
metric
\[
  g_{ij}(C)
  = \frac{\partial^2\mathcal{F}_C}{\partial h_i\,\partial h_j}
  = \mathrm{Cov}_{\rho(\beta,h^\star)}(H_i,H_j),
\]
where the covariance is taken with respect to $\rho(\beta,h^\star)$.
Thus the intrinsic NQS-metric coincides with the quantum Fisher
metric associated with the Gibbs family.

\begin{remark}
In this example, the curvature effects discussed in
Section~\ref{sec:curvature} arise when we consider multiple contexts
$C_i$ with different local Hamiltonians and ghost couplings between
them. The local metric $g_{ij}(C_i)$ is the covariance matrix of
Cartan generators, while the NQS-curvature encodes the holonomy of
their transport along context loops.
\end{remark}

\section{Discussion and outlook}
\label{sec:discussion}

In this final section we summarize the main ingredients of
N--Q--S intrinsic geometry, discuss how familiar space--time and
gauge structures may appear as particular charts, and sketch several
directions for future work.

\subsection{From operational Riemannian to emergent Lorentzian}

Although the intrinsic metric $g_{ab}(C)$ introduced here is
Riemannian at the operational level, directed ghost couplings
on the context graph naturally distinguish forward and backward
directions. In coarse-grained regimes where contexts organize
into large-scale chains with a stable notion of ``before'' and ``after'',
this induces an effective time orientation and, under suitable
scaling limits, a Lorentzian signature on emergent manifolds.
This is conceptually similar to how causal set theory recovers
Lorentzian space--times from a partially ordered set endowed
with a volume measure.\cite{bombelli1987}

From the NQS perspective, the basic ingredients are:
\begin{itemize}[leftmargin=2em]
  \item contexts $C$ as vertices in a (possibly directed) graph,
  \item ghost couplings $w_{CC'}$ as edge weights encoding
        how easily ghost roles can be shared,
  \item self-preservation functionals $\mathcal{F}_C$ and their Hessians
        as local metrics on charge space,
  \item sensitivity triples and NQS-connections as rules for transporting
        questioning data along edges.
\end{itemize}
When directed edges are interpreted as causal relations, the combination
of these structures yields an effective Lorentzian geometry on a
coarse-grained manifold $M$, together with a notion of curvature
expressed as non-commutative questioning loops.

\subsection{Contexts, charges, and curvature: a unifying picture}

The constructions in this paper suggest the following unifying picture
of NQS-geometry:
\begin{itemize}[leftmargin=2em]
  \item \emph{Points} of the geometry are operational contexts
        equipped with pointer algebras and internal symmetries.
  \item \emph{Tangent directions} are sensitivity triples
        $(dN,d\Phi,d\rho^\circ)$ describing infinitesimal changes
        in openness, channels, and stationary states.
  \item \emph{Metric data} come from the Hessians of
        self-preservation functionals with respect to internal
        charges, leading to Fisher- and Fubini--Study-type metrics
        in appropriate limits.
  \item \emph{Connection} is given by contextual transport maps
        $\Gamma_{C\to C'}$ along edges of the context graph,
        combining ghost couplings with the agent's response map
        $\Xi_C$.
  \item \emph{Curvature} is encoded in the failure of questioning
        loops to commute, as captured by the NQS-curvature operator
        $\mathcal{R}(\gamma)$.
\end{itemize}

Classical Riemannian geometry is recovered when contexts are already
organized as points of a smooth manifold and when sensitivity triples
probe only geometric degrees of freedom. In that regime, the NQS-metric
reduces to a standard Riemannian metric $g_{\mu\nu}$, the context
Laplacian converges to a Laplace--Beltrami operator, and NQS-curvature
reduces to the Riemann curvature tensor.

\subsection{GR and gauge theories as charts of NQS-geometry}

A central motivation for NQS-geometry is the hypothesis that both
general relativity and the Standard Model can be seen as particular
charts or sectors of a more general context geometry.

On the gravitational side, the discrete self-consistency equations
\eqref{eq:EL-compact} suggest that, in continuum limits, charge fields
$q(x)$ obey elliptic or hyperbolic equations of the form
\eqref{eq:continuum-EL}. The emergent metric $g_{\mu\nu}(x)$ itself
is constructed from the Hessian of a global self-preservation action,
and Einstein-type equations arise as consistency conditions relating
\begin{itemize}[leftmargin=2em]
  \item the curvature of this metric,
  \item the flows of Cartan charges across context space,
  \item and the global minimization of self-preservation cost.
\end{itemize}
From this viewpoint, the Einstein equations are not postulated but
appear as balance equations in an intrinsically operational geometry.

On the gauge-theoretic side, the qutrit example of
Section~\ref{sec:curvature} illustrates how holonomies of contextual
transport naturally encode $\mathrm{SU}(3)$ phases on Cartan charges.
In more realistic models, one may consider families of contexts whose
pointer algebras support $\mathrm{SU}(3)\times\mathrm{SU}(2)\times
\mathrm{U}(1)$ internal symmetries. The proposal is that
\begin{itemize}[leftmargin=2em]
  \item gauge connections correspond to NQS-connections restricted
        to internal charge fibres,
  \item holonomies of these connections reproduce Wilson loops and
        other gauge-invariant observables,
  \item nucleons and other stable particles appear as strongly
        self-preserving condensates in context space, i.e.\ as
        minima of the global action~\eqref{eq:global-action}.
\end{itemize}
In this sense, gauge fields and matter content are not added ``by hand'',
but arise as structural features of transport and self-preservation in
NQS-geometry.

\subsection{Future directions and companion works}

The present article is intended as a first, foundational step.
Several directions for further development are particularly natural.

\paragraph{Gravitational charts in context geometry.}
A first companion paper would focus on regimes where the context space
admits a four-dimensional manifold approximation and where directed
ghost couplings induce a robust time orientation. The goal would be to
derive explicit Einstein-type equations for the emergent metric
from a global self-preservation principle and to compare them with
classical general relativity in simple cosmological or black-hole-like
setups.

\paragraph{Internal holonomies and $\mathrm{SU}(3)$ color.}
A second companion work would develop the internal fibre aspect of
NQS-geometry, with particular emphasis on $\mathrm{SU}(3)$ color.
Building on the qutrit example, one can study contextual networks in
which Cartan charges and ghost couplings are tuned to reproduce
hadronic spectra and confinement-like behaviour, interpreting color
holonomies as manifestations of NQS-curvature in internal directions.

\paragraph{Cognitive and psychological applications.}

Beyond physical applications, it is tempting to apply NQS-geometry to cognitive and social processes, where contexts represent experiential or group configurations and self-preservation functionals encode the stability of identities or narratives. In such settings, NQS-curvature would quantify the non-commutativity of questioning and the path dependence of meaning-making. We leave these more speculative directions to future work.

\paragraph{Mathematical refinements.}
On the mathematical side, it would be natural to:
\begin{itemize}[leftmargin=2em]
  \item classify self-preservation functionals that lead to known
        monotone metrics in quantum information theory,
  \item study the space of NQS-connections compatible with a given
        metric and Laplacian, in analogy with Levi-Civita connections,
  \item investigate the existence of NQS-analogs of geodesic completeness,
        singularities, and topological invariants.
\end{itemize}

\subsection{Concluding remarks}

We have argued that operational contexts, channels, and self-preservation
principles define a natural intrinsic geometry, in which metrics and
curvatures are reconstructed from sensitivity and questioning rather
than assumed from the outset. Ordinary Riemannian and Lorentzian
geometries, along with gauge fields, appear as particular regimes
of this broader structure.

Whether NQS-geometry ultimately provides a viable route to unifying
space--time, matter, and cognitive processes remains an open question.
However, even at the level developed here, it offers a concrete
framework in which ideas from quantum information, operational
foundations, and differential geometry can be brought into systematic
contact. Further elaboration of this framework, and confrontation
with both physical models and empirical data, will be the subject
of future work.

\section*{Statements and Declarations}

\paragraph{Competing interests}
The author has no relevant financial or non-financial interests to disclose.

\paragraph{Funding}
This research received no specific grant from any funding agency in the public,
commercial, or not-for-profit sectors.

\paragraph{Data availability}
No datasets were generated or analysed during the current study.

\paragraph{Use of AI-assisted tools}
Language editing and typesetting suggestions were partially assisted by an
AI system (ChatGPT, OpenAI). The author verified all content and takes full
responsibility for the text and for all mathematical claims.

\paragraph{Author contribution}
Kazuyuki Yoshida conceived the N--Q--S framework, developed the mathematical
formalism, and wrote the manuscript.

\bibliographystyle{plain}
\bibliography{nqs-geometry}
\end{document}